\def\conv{{\rm{Conv}}}
\def\bb0{{\mathbb{0}}}
\def\bb{{\boldsymbol{b}}}
\def\bw{{\boldsymbol{w}}}
\def\b0{{\boldsymbol{0}}}
\def\x{{\mathrm{x}}}
\def\b{{\mathrm{b}}}
\def\k{{\mathrm{k}}}
\def\r0{{\mathbf{0}}}
\def\bbC{{\mathbb{C}}}
\def\bbE{{\mathbb{E}}}
\def\bbQ{{\mathbb{Q}}}
\def\bbR{{\mathbb{R}}}
\def\cW{\mathcal{W}}
\def\bsf0{{\bm{\mathsf{0}}}}
\def\N0{{N_{\mathrm{0}}}}
\def\j{\mathrm{j}}
\def\bsf{{\boldsymbol{s}_\mathrm{f}}}
\newcommand{\be}{\begin{equation}}
\newcommand{\ee}{\end{equation}}
\newcommand{\bal}{\begin{align}}
\newcommand{\eal}{\end{align}}
\def\E{{\mathbb E}}
\theoremstyle{remark}
\newtheorem{theorem}{Theorem}
\newtheorem{lemma}{Lemma}
\newtheorem{corollary}{Corollary}
\begin{document}
%
% paper title
% Titles are generally capitalized except for words such as a, an, and, as,
% at, but, by, for, in, nor, of, on, or, the, to and up, which are usually
% not capitalized unless they are the first or last word of the title.
% Linebreaks \\ can be used within to get better formatting as desired.
% Do not put math or special symbols in the title.
\title{Beamforming Gain with Nonideal Phase Shifters%Deficit from Nonideal Phase Shifters
\thanks{H. Do and A. Lozano are with Univ. Pompeu Fabra, 08018 Barcelona (e-mail:\{heedong.do, angel.lozano\}@upf.edu).
Their work is supported the Maria de Maeztu Units of Excellence Programme CEX2021-001195-M funded by MICIU/AEI/10.13039/501100011033, and by the Departament de Recerca i Universitats de la Generalitat de Catalunya.
}
}

\author{\IEEEauthorblockN{Heedong~Do},
 {\it Member,~IEEE},
 \and
 \IEEEauthorblockN{Angel~Lozano},
{\it Fellow,~IEEE}
\vspace{-4mm}
}
\maketitle

% author names and affiliations
% use a multiple column layout for up to three different
% affiliations

% conference papers do not typically use \thanks and this command
% is locked out in conference mode. If really needed, such as for
% the acknowledgment of grants, issue a \IEEEoverridecommandlockouts
% after \documentclass

% for over three affiliations, or if they all won't fit within the width
% of the page, use this alternative format:
% 
%\author{\IEEEauthorblockN{Michael Shell\IEEEauthorrefmark{1},
%Homer Simpson\IEEEauthorrefmark{2},
%James Kirk\IEEEauthorrefmark{3}, 
%Montgomery Scott\IEEEauthorrefmark{3} and
%Eldon Tyrell\IEEEauthorrefmark{4}}
%\IEEEauthorblockA{\IEEEauthorrefmark{1}School of Electrical and Computer Engineering\\
%Georgia Institute of Technology,
%Atlanta, Georgia 30332--0250\\ Email: see http://www.michaelshell.org/contact.html}
%\IEEEauthorblockA{\IEEEauthorrefmark{2}Twentieth Century Fox, Springfield, USA\\
%Email: homer@thesimpsons.com}
%\IEEEauthorblockA{\IEEEauthorrefmark{3}Starfleet Academy, San Francisco, California 96678-2391\\
%Telephone: (800) 555--1212, Fax: (888) 555--1212}
%\IEEEauthorblockA{\IEEEauthorrefmark{4}Tyrell Inc., 123 Replicant Street, Los Angeles, California 90210--4321}}

% use for special paper notices
%\IEEEspecialpapernotice{(Invited Paper)}

% make the title area
\maketitle

% As a general rule, do not put math, special symbols or citations
% in the abstract
\begin{abstract}
%While theoretical models often assume "ideal" phase shifters that can manipulate signals with perfect precision and constant magnitude, real-world hardware is limited by coarse resolution, restricted phase ranges, and unintended fluctuations in signal strength.

This research sets forth a universal framework to characterize the beamforming gain achievable with
arbitrarily nonideal phase shifters.
%phase shifters experiencing arbitrary nonidealities.
%any arbitrary set of feasible beamforming coefficients.
Precisely, the maximum possible shortfall relative to the gain attainable with ideal phase shifters is established. Such shortfall is shown to be fundamentally determined by the perimeter of the convex hull of the set of feasible beamforming coefficients on the complex plane. This result holds regardless of whether the beamforming is at the transmitter, at the receiver, or at a reconfigurable intelligent surface. 
% In fading channels, the result applies to both the instantaneous beamforming gain and, with minor caveats, to its average value.
In i.i.d. fading channels, the shortfall hardens to the maximum possible shortfall as the number of antennas grows.
\end{abstract}

\begin{IEEEkeywords}
Beamforming, digital beamforming, analog beamforming, equal-gain combining, convex geometry, convex hull, reconfigurable intelligent surface, fading channels
\end{IEEEkeywords}

% no keywords

% For peer review papers, you can put extra information on the cover
% page as needed:
% \ifCLASSOPTIONpeerreview
% \begin{center} \bfseries EDICS Category: 3-BBND \end{center}
% \fi
%
% For peerreview papers, this IEEEtran command inserts a page break and
% creates the second title. It will be ignored for other modes.
\IEEEpeerreviewmaketitle

\section{Introduction}

%One of the main drivers in the evolution of radio access networks has been the relentless increase in the dimensionality of the antenna arrays featured by base stations. The number of antennas in those arrays went from $4$--$8$ in 4G to $64$--$128$ in 5G \cite{boccardi2014five}, and that number could reach $4096$ in 6G \cite{wang2024tutorial}.

%In search of fresh spectrum, there is much interest in millimeter wave and sub-terahertz frequencies \cite{do2021terahertz}. At these frequencies, a large number of antenna elements are required to countervail high isotropic pathloss. This makes fully-digital approach with one radio-frequency (RF) chain per element impractical. Reducing the number of RF chains to one, beamforming must be performed in the analog domain using phase shifters.

Beamforming is a long-standing technique in wireless communication, instrumental to increase the received signal strength and/or extend the transmission range \cite[Sec. 5.3]{heathlozano2018foundations}. 
The relevance of beamforming has only grown over time, as operating frequencies increase and
the array dimensionalities surge to counter the associated propagation losses.
In 6G, %the main frequency bands might be somewhere between $7$ and $15$ GHz, with possible forays into the mmWave realm and even speculation of sub-terahertz components \cite{do2021terahertz}, while 
base stations could feature well over a thousand antennas in their arrays \cite{wang2024tutorial}.
Beamforming will be a chief ingredient, and a complete understanding of its performance is thus of paramount importance.

The scope of this paper extends to any beamformer, analog or digital, at transmitter or receiver, %or even at a reconfigurable intelligent surface (RIS),
that is implemented by shifting the signal's phase at each antenna.
%Beamforming is implemented by suitably shifting the phase of the signal at each antenna within an array, be it at the transmitter or the receiver.
For the sake of performance analysis, phase shifters are usually regarded as \text{ideal}, meaning that they can manipulate the phase unrestrictedly, with an infinite resolution, and without altering the signal's magnitude.
%\angel{Need to restrict ourselves to equal-gain combining}
% For the sake of analytical tractability, \text{ideal} phase shifters are often considered, which has a constant amplitude and has an infinite phase resolution.
This ideality, however, does not hold in actuality because:
\begin{itemize}
    \item For digital beamformers, the resolution is necessarily coarse \cite[Table 2-5]{chakraborty2017paradigm}, and the maximum phase change
    is limited \cite[Table 6]{chakraborty2017paradigm}.
    \item For analog beamformers, while the resolution is infinite,
    the range is again limited \cite[p. 27]{chakraborty2017paradigm}.
    \item Insertion losses are not independent of the phase shift, which couples the phase and magnitude responses \cite{fakharzadeh2008effect}.
    % \item Even if one design a nice phase shifter at frequency, its response drifts away from this nice design at another frequency.
\end{itemize}
These and other non-idealities can be captured by defining a set (continuous or discrete) of feasible beamforming coefficients, and the question is then to quantify
%identifying the set of feasible phase shits.
%subsumed into a set of feasible phase shifts
the beamforming gain when the coefficients are restricted to being drawn from this set. This paper tackles this question, with the key findings being that:
\begin{itemize}
    \item For any given channel, the shortfall relative to ideal phase shifting 
    of any arbitrary set of feasible beamforming coefficients is bounded.
    \item This worst-case shortfall is intimately related to the perimeter of the convex hull of the set of beamforming coefficients on the complex plane.
    \item In i.i.d. fading channels, as the number of antennas grows large the shortfall comes to equal this worse-case value.
    %The connection with the perimeter extends to the average beamforming gain in i.i.d. fading channels. %which had gone unnoticed.
\end{itemize}

The manuscript is organized as follows. The main result bounding the shortfall relative to ideal phase shifting is presented immediately in Sec.~\ref{sec_main_result}, with its derivation following in Sec.~\ref{sec_derivation}. Subsequently, Sec.~\ref{geometric_proof} provides an alternative derivation that is geometric in nature, and Sec.~\ref{sec_tightening}
shows how the main result can be tightened slightly at the expense of generality. Finally, Sec.~\ref{sec_conclusion} concludes the paper.

%\section{Feasible Phase Shifts and Beamforming Gain}
\section{Main Result}
\label{sec_main_result}

Consider a connection between an $N$-antenna array and a single antenna. Within the array, each antenna is able to apply a beamforming coefficient drawn from a compact set
% \angel{To discuss: $\cW$ is really the set of beamforming coefficients rather than phase shifts} \heedong{Do you mean that the set of phase shifts are something like $\{0, \frac{\pi}{2}, \pi, \frac{3\pi}{2}\}$? I haven't thought about that but I do agree that ``the set of beamforming coefficients'' is much clearer.}
\begin{align}
    \cW \subset \{w\in\bbC: |w|\leq 1\}. \label{phase_shifter_set}
\end{align}
As a special case, an ideal phase shifter would correspond to
\begin{align}
    \cW_{\text{ideal}} = \{w\in\bbC: |w|= 1\}. \label{ideal_phase_shifter}
\end{align}
Common sets, both continuous and discrete, are listed in Table~\ref{table:phase_shifter_sets}. In particular, and for future reference, the regular $M$-gon is denoted by
\begin{align}
    \cW_M \equiv \left\{ e^{j2\pi \frac{0}{M}}, e^{j2\pi \frac{1}{M}}, \ldots, e^{j2\pi \frac{M-1}{M}}\right\}. \label{regular_polygon}
\end{align}

Let $h_n$ be the channel coefficient at the $n$th array antenna, such that
%us denote by $h_n = |h_n|e^{j\theta_n}$ 
 \begin{align}
     h_n = |h_n|e^{j\theta_n},
 \end{align}
and let $w_n$ be the beamforming coefficient at the $n$th antenna. 
The beamforming gain is given by
%\footnote{In transmit beamforming with a joint power constraint, the signal strength is multiplied by \eqref{beamforming_gain}. In receive beamforming (respectively in transmit beamforming under a per-antenna power constraint), the signal strength is further multiplied by $N$, but arguably this factor merely reflects how every additional antenna captures (respectively radiates) fresh power; the gain attributable to the beamforming itself is still given by \eqref{beamforming_gain}.}
%\angel{To discuss: the normalization in (5) should really be by $\sum |w_n|^2$ rather than $N$}
\begin{align}
    %\frac{1}{N}
    \bigg| \sum_n w_n h_n \bigg|^2 , \label{beamforming_gain}
\end{align}
suitably normalized depending on the pertinent power constraint, which in turn depends on whether the beamforming is at a transmitter or receiver.
%with the factor $\frac{1}{N}$ introduced to follow the convention that the beamforming vector has a unit norm.
For the sake of compactness,
%its unnormalized magnitude form
\begin{align}
    g(\bw) = \bigg| \sum_n w_n h_n \bigg| \label{objective}
\end{align}
is henceforth used in lieu of \eqref{beamforming_gain}, with $\bw\equiv (w_1,\ldots, w_N)$. 
% \heedong{I just noticed that we don´t use this notation elsewhere!} \angel{In (8) and (13), but we could replace those instances with $(w_1,\ldots,w_N)$ if you prefer to get rid of it} \heedong{I mean the notation $g(\bw)$} \angel{Ah, ok. I guess it's because elsewhere we've optimized over $\bw$. I think it's fine, I would leave it.}

For ideal phase shifters abiding by \eqref{ideal_phase_shifter}, the maximum value of \eqref{objective} is, from the triangle inequality,
\begin{align}
   g_{\text{ideal}} = \sum_n |h_n| \label{ideal_beamforming_gain}
\end{align}
achieved by $w_n = e^{-j\theta_n}$. % for $n=1,\ldots,N$. 

For arbitrary phase shifters, one can simply round each $w_n = e^{-j\theta_n}$ to
$\cW$ \cite{mailloux1984array, liang2014low, yang2017study, do2023line, loyka2025irs}.
%
   % Although the array is fully digital, when the channel state information is fed back from the receiver, one needs to choose a beamforming vector from a codebook. Among those codebooks, quantized equal gain transmission shares the identical model \cite{narula1998efficient, heath1998simple, love2003equal, murthy2007quantization, tsai2009transmit}.
%
%This approach is appealing for its computational simplicity.
This simplicity, though, comes with a penalty in beamforming gain.
%which is problematic when the phase shifters are far from ideal.
Such a penalty can be prevented by designs that, rather than rounding to $\cW$, are based on it from the outset \cite{haupt1997phase, ismail2010array, goudos2017antenna, leino2020beam, madani2021practical,narula1998efficient, heath1998simple, love2003equal, murthy2007quantization, tsai2009transmit}. %or by means of a suitable codebook \cite{narula1998efficient, heath1998simple, love2003equal, murthy2007quantization, tsai2009transmit}.
In fact, it turns out that the optimum $\bw$ for any arbitrary $\cW$ and $h_1,\ldots,h_N$ as well as the corresponding %beamforming gain
\begin{align}
   g_\cW = \max_{\bw \in \cW^N} & \; \bigg| \sum_n w_n h_n \bigg| \label{original_problem}
    %\text{s.t.} & \quad \bw\in\cW^N. \nonumber
\end{align}
can be efficiently computed without exhaustively searching over $\cW^N$ \cite{mackenthun1994fast, sweldens2001fast, motedayenaval2003polynomial, alevizos2016log, deng2019mmwave, sanchez2021optimal, zhang2022configuring, ren2023ieee, vardakis2023intelligently, pekcan2024achieving, sanjay2024optimum,do2026optimum}.

This paper
%bounds the shortfall relative to $g_{\text{ideal}}$ incurred by any arbitrary set of feasible phase shifts. Precisely, it
uncovers a universal relationship between this optimum value with arbitrary phase shifters, $g_\cW$, and its counterpart with ideal phase shifters, $g_\text{ideal}$.
Specifically, it is established that
%establishes that, for any arbitrary set of feasible phase shifts,
\begin{align}
g_\cW \geq \text{constant} \cdot g_\text{ideal} \label{main_result}
%    \max_{\bw \in \cW^N} \bigg| \sum_n w_n h_n \bigg| \geq \text{constant} \cdot \sum_n |h_n|, \label{main_result}
\end{align}
with
\begin{align}
  \text{constant} = \frac{\text{perimeter of }\conv\,\cW}{2\pi}, \label{best_constant}
\end{align}
where $\conv(\cdot)$ denotes the convex hull of a set.
%\heedong{Our formula is super simple; take a look at some papers such as \cite{zhang2021performance}.}
The shortfall in beamforming gain (in dB) associated with nonideal phase shifting is therefore, at most,
\begin{align}
20 \log_{10} \! \left( \frac{\text{perimeter of }\conv\,\cW}{2\pi} \right)  . \label{constant_in_dB}
\end{align}
Note that, if $\cW$ does not reside on the unit circle, this shortfall subsumes a power loss
%\begin{align}
%    10 \log_{10} \! \left( \frac{\sum_n |w_n|^2}{N} \right) .
%\end{align}
\begin{align}
    10 \log_{10} \! \left( \frac{1}{N} \sum_n |w_n|^2 \right) .
\end{align}
Conversely, if $\cW$ does reside on the unit circle, then the shortfall is caused entirely by the limited range and resolution of the phase shifting.

\begin{table*}
\setlength{\tabcolsep}{3pt}
\renewcommand{\arraystretch}{1.3}
\centering
\begin{threeparttable}
\caption{Common sets $\cW$ with the shaded regions indicating the corresponding convex hulls}
\label{table:phase_shifter_sets}
\begin{tabular}{l c c c c c c c} 
\toprule
%  & Ideal\tnote{*} & Binary & Regular &  Irregular & Lorentzian & Discrete Lorentzian & Cardioid-like\tnote{*}
% \\
% \midrule
\raisebox{6.5em}{Set} &
\begin{tikzpicture}[>=stealth]
\begin{scope}[scale = 0.6]
    \clip (-1.7,-1.7) rectangle (1.7,1.7);
    % xy-plane
    \draw[line width=0.5pt, ->] (-1.7,0) -- (1.7,0) node (X) [right,xshift = -0.5 cm, yshift = -0.2 cm]{Re};
    \draw[line width=0.5pt, ->] (0,-1.7) -- (0,1.7) node[above,xshift = -0.25 cm, yshift = -0.4 cm]{Im};
    % set
    \filldraw[line width=1pt, color = black, fill = gray, fill opacity = 0.5] (0, 0) circle (1) node{};
\end{scope}
\end{tikzpicture} &
\begin{tikzpicture}[>=stealth]
\begin{scope}[scale = 0.6]
    \clip (-1.7,-1.7) rectangle (1.7,1.7);
    % xy-plane
    \draw[line width=0.5pt, ->] (-1.7,0) -- (1.7,0) node (X) [right,xshift = -0.5 cm, yshift = -0.2 cm]{Re};
    \draw[line width=0.5pt, ->] (0,-1.7) -- (0,1.7) node[above,xshift = -0.25 cm, yshift = -0.4 cm]{Im};
    % set
    \draw[line width=2pt, color=gray] (0,0) -- (1,0);
    \filldraw (0, 0) circle (2pt) node{};
    \filldraw (1, 0) circle (2pt) node{};
\end{scope}
\end{tikzpicture} &
\begin{tikzpicture}[>=stealth]
\begin{scope}[scale = 0.6]
    \clip (-1.7,-1.7) rectangle (1.7,1.7);
    % xy-plane
    \draw[line width=0.5pt, ->] (-1.7,0) -- (1.7,0) node (X) [right,xshift = -0.5 cm, yshift = -0.2 cm]{Re};
    \draw[line width=0.5pt, ->] (0,-1.7) -- (0,1.7) node[above,xshift = -0.25 cm, yshift = -0.4 cm]{Im};
    % set
    \node[regular polygon, regular polygon sides=8, shape border rotate=22.5, minimum size=1.2cm, fill=gray, line width=1pt, fill opacity=0.5] at (0,0) {};
    \draw[dashed] (0,0)  circle(1cm);
    
    \foreach \k in {0,...,7}{
        \filldraw ({cos(45*\k)}, {sin(45*\k)}) circle (2pt) node{};
    };
\end{scope}
\end{tikzpicture} &
\begin{tikzpicture}[>=stealth]
\begin{scope}[scale = 0.6]
    \clip (-1.7,-1.7) rectangle (1.7,1.7);
    % xy-plane
    \draw[line width=0.5pt, ->] (-1.7,0) -- (1.7,0) node (X) [right,xshift = -0.5 cm, yshift = -0.2 cm]{Re};
    \draw[line width=0.5pt, ->] (0,-1.7) -- (0,1.7) node[above,xshift = -0.25 cm, yshift = -0.4 cm]{Im};
    % set
    \foreach \i [count=\j from 0] in {0,40,80,120,160,200,240,280} {
        \coordinate (P\j) at ({cos(\i)},{sin(\i)});
        \filldraw (P\j) circle (2pt) node{};
    }
    \fill[fill=gray, fill opacity=0.5] 
    (P0) -- (P1) -- (P2) -- (P3) -- (P4) -- (P5) -- (P6) -- (P7) -- cycle;
    \draw[dashed] (0,0)  circle(1cm);
\end{scope}
\end{tikzpicture}
&
\begin{tikzpicture}[>=stealth]
\begin{scope}[scale = 0.6]
    \clip (-1.7,-1.7) rectangle (1.7,1.7);
    % xy-plane
    \draw[line width=0.5pt, ->] (-1.7,0) -- (1.7,0) node (X) [right,xshift = -0.5 cm, yshift = -0.2 cm]{Re};
    \draw[line width=0.5pt, ->] (0,-1.7) -- (0,1.7) node[above,xshift = -0.25 cm, yshift = -0.4 cm]{Im};
    % set
    \filldraw[line width=1pt, color = black, fill = gray, fill opacity = 0.5] (0, 0.5) circle (0.5) node{};
\end{scope}
\end{tikzpicture}
&
\begin{tikzpicture}[>=stealth]
\begin{scope}[scale = 0.6]
    \clip (-1.7,-1.7) rectangle (1.7,1.7);
    % xy-plane
    \draw[line width=0.5pt, ->] (-1.7,0) -- (1.7,0) node (X) [right,xshift = -0.5 cm, yshift = -0.2 cm]{Re};
    \draw[line width=0.5pt, ->] (0,-1.7) -- (0,1.7) node[above,xshift = -0.25 cm, yshift = -0.4 cm]{Im};
    % set
    \draw[dashed] (0, 0.5) circle (0.5) node{};
    \node[regular polygon, regular polygon sides=8, shape border rotate=22.5, minimum size=0.6cm, fill=gray, line width=1pt, fill opacity=0.5] at (0,0.5) {};
    \foreach \k in {0,...,7}{
        \filldraw ({0.5*cos(45*\k)}, {0.5*(sin(45*\k)+1)}) circle (2pt) node{};
    };
\end{scope}
\end{tikzpicture}
&
\begin{tikzpicture}[>=stealth]
\begin{scope}[scale = 0.6]
    \clip (-1.7,-1.7) rectangle (1.7,1.7);
    \draw[line width=0.5pt, ->] (-1.7,0) -- (1.7,0) node (X) [right,xshift = -0.5 cm, yshift = -0.2 cm]{Re};
    \draw[line width=0.5pt, ->] (0,-1.7) -- (0,1.7) node[above,xshift = -0.25 cm, yshift = -0.4 cm]{Im};
    
    \pgfmathsetmacro{\alphaVal}{2.0}
    \pgfmathsetmacro{\betaVal}{0.5}
    
    \filldraw[line width=1pt, color=black, fill=gray, fill opacity=0.5, samples=150, smooth, domain=0:360] 
        plot (\x : {(1-\betaVal) * ((1 + sin(\x))/2)^(\alphaVal) + \betaVal}) -- cycle;
\end{scope}
\end{tikzpicture}
\\
% Parameter & - & - & $M\in\bbZ^+$ & $M\in\bbZ^+, \omega \in \big[0,\frac{M-1}{M}\cdot 2\pi\big]$ & - & $\alpha \in[0,\infty), \beta\in[0,1]$
% \\
% Perimeter & $2\pi$ & $2$ & $2M \sin \frac{\pi}{M}$ & $2(M-1)\sin \frac{\omega}{2(M-1)} + 2\sin \frac{\omega}{2} $ & $\pi$ & -
% \\
% Constant & $1$ & $\frac{1}{\pi}$ & $\frac{M}{\pi}\sin \frac{\pi}{M}$ & $\frac{M-1}{\pi}\sin \frac{\omega}{2(M-1)} + \frac{1}{\pi}\sin \frac{\omega}{2} $ & $\frac{1}{2}$ & -
% \\
References & - & \cite{zhang2018on, arun2020rfocus, khaleel2022phase, nassirpour2023power}  & \cite{smith1983comparison, sohrabi2016hybrid, wang2018hybrid, wu2020beamforming} & \cite{cao2024ris, kutay2024received, xy2025effect} & \cite{shlezinger2019dynamic, rezvani2024channel, castellanos2025embracing} & \cite{di2020hybrid}  & \cite{abeywickrama2020intelligent, zhang2021performance}\tnote{*}
\\
\bottomrule
\end{tabular}
\begin{tablenotes}
% \item[*] The constant for regular $M$-gon is approximately 0.637, 0.900, 0.974 for $M = 2, 4, 8$, which corresponds to 1.96, 0.45, 0.11 dB deficit.
% \item[*] \angel{Perhaps move these comments out of the figure?} Phase shifters which are capable of amplitude control are considered in  \cite{mu2020exploiting, wang2024applications, zheng2024intelligent}, where $\cW$ is the unit disk instead of the unit circle.
\item[*] $\cW = \big\{r(\theta)e^{j\theta}:  r(\theta) = (1-\beta) \big(\frac{1+\sin \theta}{2}\big)^\alpha + \beta \big \}$, with the drawing corresponding to $\alpha = 2$ and $\beta=0.5$.
\end{tablenotes}
\end{threeparttable}
\end{table*}

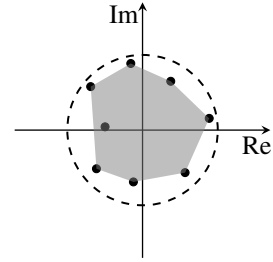
\begin{figure}
    \centering
    \begin{tikzpicture}[>=stealth]
    \begin{scope}
        \clip (-1.7,-1.7) rectangle (1.7,1.7);
        % xy-plane
        \draw[line width=0.5pt, ->] (-1.7,0) -- (1.7,0) node (X) [right,xshift = -0.5 cm, yshift = -0.2 cm]{Re};
        \draw[line width=0.5pt, ->] (0,-1.7) -- (0,1.7) node[above,xshift = -0.25 cm, yshift = -0.4 cm]{Im};
        % set
        \coordinate (P0) at (10:0.9);
        \coordinate (P1) at (60:0.75);
        \coordinate (P2) at (100:0.9);
        \coordinate (P3) at (140:0.9);
        \coordinate (P4) at (175:0.5);
        \coordinate (P5) at (220:0.8);
        \coordinate (P6) at (260:0.7);
        \coordinate (P7) at (315:0.8);
        \foreach \i in {0,1,...,7} {
            \filldraw (P\i) circle (1.5pt) node{};
        }
        \fill[fill=gray, fill opacity=0.5] 
        (P0) -- (P1) -- (P2) -- (P3) -- (P5) -- (P6) -- (P7) -- cycle;
        \draw[dashed, line width = 0.75pt] (0,0) circle (1cm);
    \end{scope}
    \end{tikzpicture}
    \caption{Response of $3$-bit phase shifter. % described by a discrete set $\cW\subset\bbC$.
    The dots are the elements of $\cW$ while the shaded region is their convex hull. The dashed circle is $\cW_{\text{ideal}}$. As the perimeter of the heptagonal convex hull equals 5.01, there exists some $\bw$ attaining $5.01/2\pi \approx 80\%$ of $g_{\text{ideal}}$, which corresponds to a $2$-dB shortfall in beamforming gain. Part of this shortfall is the power loss caused by the magnitude of the elements of $\cW$ being less than unity.
    % \angel{Here and in Fig. 2, I'd mention that the shaded region is the convex hull}
    }
    \label{fig:imperfection}
\end{figure}

The applicability of the above result is exemplified in Fig.~\ref{fig:imperfection}.
This applicability extends, besides point-to-point beamforming, to transmissions via a RIS. In that case, the role of the $N$ array antennas is played by the $N$ RIS elements and $h_n$ signifies the channel coefficient linking transmitter and receiver through the $n$th such element. Also included are settings where,
besides the transmitter-RIS-receiver connection, there is a direct transmitter-receiver path and the response coefficients at the RIS elements are drawn from
$\cW_M$: letting $h_0$ denote the direct path, the beamforming optimization
\begin{align}
    &\max_{\bw \in \cW_M^N} \bigg| h_0 + \sum_n w_n h_n \bigg| = \!\! \max_{(w_0, \bw) \in \cW_M^{N+1}} \bigg| w_0 h_0 + \sum_n w_n h_n \bigg| \label{direct_path}
\end{align}
is merely an augmented version of \eqref{original_problem} (see App. \ref{app:direct_path}).

The relevance of the result also extends to fading channels, and in particular to
%Enter fading channels. Interestingly, the constant in \eqref{best_constant} again plays a central role.
$\{h_n\}$ drawn from any i.i.d. circularly symmetric distribution.
If $\{h_n\}$ are i.i.d. realizations of $h$, then, as $N$ grows large, % with $\bbE\big[|h|^2\big]<\infty$.
the ideal beamforming gain (normalized by $N^2$ to keep it from diverging)
converges to $( \E[|h|] )^2$.
Relative to it, the worst-case shortfall with an arbitrary set of phase of phase shifts is again given by \eqref{constant_in_dB}. In fact, in this large-$N$ regime, the shortfall equals this worst-case value.

% Denoting $h_0 = |h_0|e^{j\theta_0}$, the analogous argument gives the lower bound
% \begin{align}
%     &\frac{1}{2\pi} \int_0^{2\pi}\Re{ e^{j\theta} } d \theta \cdot |h_0| \\
%     &\qquad\qquad  + \frac{1}{2\pi} \int_0^{2\pi}\max_{w\in \cW} \Re{ e^{j\theta} w }d \theta \cdot \sum_n  |h_n| \bigg). \nonumber
% \end{align}
% The issue here is that the first term vanishes, which renders the bound unsatisfactory when $|h_0|$ is large.

% Note that the upper bound, which cannot be further improved, can be straightforwardly obtained from the triangle inequality:
% \begin{align}
%     \bigg| \sum_n w_n h_n \bigg| 
%     &\leq  \sum_n |w_n| |h_n| \\
%     &\leq \max_{w\in\cW} |w| \cdot \sum_n |h_n|.
% \end{align}

\section{Derivation of the Main Result}
\label{sec_derivation}

This section lays down the derivation of \eqref{main_result}--\eqref{best_constant}, beginning with a presentation of
the necessary mathematical machinery, chiefly the notion of support function.

\subsection{The Support Function}

Identifying $\bbC$ with $\bbR^2$, the inner product of two complex numbers can be defined as
\begin{align}
    \langle \cdot,\cdot \rangle : \bbC \times \bbC &\rightarrow \bbR \label{inner_product}\\
    (z,w) &\mapsto \Re{\overline{z}w}. \nonumber
\end{align}
The workhorse of the analysis that follows is the identity
\begin{align}
    |w| = \max_{\theta\in[0,2\pi]} \langle e^{j\theta}, w \rangle ,
    \label{workhorse}
\end{align}
which turns a nonlinear function into the optimization of a linear one, whereby
the beamforming optimization can be decoupled into
\begin{align}
     g_\cW & =  \max_{\bw \in \cW^N} \bigg| \sum_n w_n h_n \bigg|\\
    &=\max_{\bw \in \cW^N} \max_{\theta\in[0,2\pi]} \Big\langle e^{j\theta}, \sum_n w_n h_n \Big\rangle \\
    &= \max_{\bw \in \cW^N} \max_{\theta\in [0,2\pi]}  \sum_n \big\langle e^{j\theta}, w_nh_n \big\rangle\\
    &= \max_{\theta\in [0,2\pi]}    \sum_n \max_{w \in \cW} \big\langle e^{j\theta}, wh_n \big\rangle\\
    &= \max_{\theta\in [0,2\pi]}    \sum_n |h_n| \max_{w \in \cW} \big\langle e^{j(\theta-\theta_n)}, w \big\rangle.
    \label{one_dimensional_search}
\end{align}
% Here, $g:\bbC\rightarrow\bbR$ is defined by
% \begin{align}
%     g(z) &\equiv \max_{w\in\cW} \langle z,w \rangle. \label{support_function}
% \end{align}
% We will denote it by $g(z;\cW)$ instead, if necessary. 
%This decoupling plays a central role in computationally efficient beamforming algorithms for discrete phase shifts \cite{do2026optimum}. 
% The expression in \eqref{one_dimensional_search} shows that the maximization of beamforming gain is essentially a one-dimensional search problem.

The inner optimization problem in \eqref{one_dimensional_search} is of the form $\max\limits_{w\in S} \langle z,w \rangle$, hence the mapping 
\begin{align}
    \bbC&\rightarrow \bbR \label{support_function}\\
    z&\mapsto \max_{w\in S} \langle z,w \rangle \nonumber
\end{align}
becomes of interest. This mapping, well defined because $S$ is a compact set, depends only on the convex hull of $S$.

\begin{lemma}
\label{lemma:convex_hull_and_support_function}
For a compact set $S\subset \bbC$,
\begin{align}
    \max_{w\in S}\langle z,w \rangle = \max_{w\in \conv\,S}\langle z,w \rangle.
\end{align}
\end{lemma}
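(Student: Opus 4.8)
The plan is to recognize the right-hand side as the value at $z$ of the support function of $S$ and to exploit two elementary facts: the map $w\mapsto\langle z,w\rangle$ is linear in $w$, and every point of $\conv S$ is a finite convex combination of points of $S$.

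First I would settle one inequality for free: since $S\subseteq\conv S$, maximizing $\langle z,\cdot\rangle$ over the larger set can only help, so $\max_{w\in S}\langle z,w\rangle\le\sup_{w\in\conv S}\langle z,w\rangle$. For the reverse direction, I would take an arbitrary $w\in\conv S$ and write $w=\sum_{i=1}^{k}\lambda_i w_i$ with $w_i\in S$, $\lambda_i\ge 0$, and $\sum_i\lambda_i=1$ (by Carath\'{e}odory one may even take $k\le 3$ in the plane, though this is not needed). Linearity of $\langle z,\cdot\rangle$ in its second slot then gives $\langle z,w\rangle=\sum_{i=1}^{k}\lambda_i\langle z,w_i\rangle\le\max_{1\le i\le k}\langle z,w_i\rangle\le\max_{w'\in S}\langle z,w'\rangle$, where the last maximum exists because $S$ is compact and $\langle z,\cdot\rangle$ is continuous. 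Taking the supremum over $w\in\conv S$ yields $\sup_{w\in\conv S}\langle z,w\rangle\le\max_{w'\in S}\langle z,w'\rangle$, and combining this with the first inequality forces equality.

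The only point requiring care — and it is hardly an obstacle — is justifying that the supremum over $\conv S$ is actually attained, so that writing $\max$ there is legitimate. This is immediate from the chain above: $\sup_{w\in\conv S}\langle z,w\rangle$ has just been shown to equal $\max_{w'\in S}\langle z,w'\rangle$, a value already achieved at some $w'\in S\subseteq\conv S$. (Alternatively, one could invoke the standard fact that the convex hull of a compact subset of $\bbR^2$ is compact and argue directly with maxima throughout.) I expect the resulting proof to run only a few lines.
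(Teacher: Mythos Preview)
Your argument is correct. The paper, however, does not spell out a proof at all: it simply remarks that the statement is a special case of \cite[Thm.~32.2]{rockafellar1997convex}, which asserts more generally that a convex function attains the same supremum over a set and over its convex hull. Your route is the natural specialization of that fact to the linear functional $w\mapsto\langle z,w\rangle$, worked out from first principles via convex combinations; it is more self-contained and requires no external reference, whereas the paper's one-line citation trades the elementary computation for brevity and situates the lemma within a broader convex-analysis framework.
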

\begin{proof}
This is a special case of the result for convex functions in \cite[Thm. 32.2]{rockafellar1997convex}.
% See App. \ref{app:support_function}.
\end{proof}

The mapping in \eqref{support_function} can be identified with the \textit{support function} of $\conv\,S$ \cite[Ch. 1.7]{schneider2013convex}. A useful property of convex hull is provided next.
% Some properties of this function are provided next. 
%The support function plays a central role throughout the section. 
% The \textit{support set} of a compact convex set $S$ is defined as
% \begin{align}
%     \argmax_{w\in S} \langle z,w \rangle. \label{support_set}
% \end{align}
 
\begin{lemma}
\label{lemma:convex_hull_and_minkowski_sum}
For sets $S_1$ and $S_2$ in $\bbC$,
\begin{align}
    \conv(S_1 + S_2) = \conv\, S_1 + \conv\, S_2.
\end{align}
\end{lemma}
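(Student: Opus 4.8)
The plan is to establish the set equality by double inclusion, with $+$ understood throughout as the Minkowski sum $S_1+S_2=\{s_1+s_2:s_1\in S_1,\ s_2\in S_2\}$.

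For the inclusion $\conv(S_1+S_2)\subseteq\conv\,S_1+\conv\,S_2$, I would first note that the Minkowski sum of two convex sets is again convex: if $a_1+a_2$ and $b_1+b_2$ lie in $C_1+C_2$ with each $C_i$ convex, then for $\lambda\in[0,1]$ the combination $\lambda(a_1+a_2)+(1-\lambda)(b_1+b_2)$ rearranges as $\bigl(\lambda a_1+(1-\lambda)b_1\bigr)+\bigl(\lambda a_2+(1-\lambda)b_2\bigr)\in C_1+C_2$. Since also $S_i\subseteq\conv\,S_i$ gives $S_1+S_2\subseteq\conv\,S_1+\conv\,S_2$, the right-hand side is a convex set containing $S_1+S_2$, hence it contains the smallest such set, namely $\conv(S_1+S_2)$.

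For the reverse inclusion, I would take an arbitrary $x\in\conv\,S_1+\conv\,S_2$ and write $x=y_1+y_2$ with $y_1=\sum_i\alpha_i a_i$ and $y_2=\sum_j\beta_j b_j$ finite convex combinations of points $a_i\in S_1$ and $b_j\in S_2$, so that $\alpha_i,\beta_j\ge 0$, $\sum_i\alpha_i=1$, and $\sum_j\beta_j=1$. Multiplying $y_1$ by $\sum_j\beta_j=1$ and $y_2$ by $\sum_i\alpha_i=1$ and expanding,
\[ x=\Big(\sum_i\alpha_i a_i\Big)\Big(\sum_j\beta_j\Big)+\Big(\sum_i\alpha_i\Big)\Big(\sum_j\beta_j b_j\Big)=\sum_{i,j}\alpha_i\beta_j\,(a_i+b_j). \]
Since the weights $\alpha_i\beta_j$ are nonnegative and satisfy $\sum_{i,j}\alpha_i\beta_j=\bigl(\sum_i\alpha_i\bigr)\bigl(\sum_j\beta_j\bigr)=1$, this displays $x$ as a convex combination of the points $a_i+b_j\in S_1+S_2$, whence $x\in\conv(S_1+S_2)$.

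I do not expect a genuine obstacle; the only step needing care is the bookkeeping in the reverse inclusion, i.e. verifying that the doubly-indexed coefficients $\{\alpha_i\beta_j\}$ form a legitimate convex combination so that the expanded point indeed belongs to $\conv(S_1+S_2)$. As an alternative one could invoke the standard facts that convex hull commutes with affine images and that $\conv(S_1\times S_2)=\conv\,S_1\times\conv\,S_2$, applied to the addition map $\bbC\times\bbC\to\bbC$, $(z,w)\mapsto z+w$; but the elementary argument above is self-contained and short enough to record directly.
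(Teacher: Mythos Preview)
Your double-inclusion argument is correct and cleanly written; the bookkeeping with the product weights $\alpha_i\beta_j$ is exactly what is needed for the reverse inclusion, and there is no gap.

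The paper itself does not give a proof at all: it simply cites \cite[Thm.~1.1.2]{schneider2013convex}. Your approach therefore differs only in that you supply the standard elementary argument in full rather than deferring to Schneider's textbook. What you gain is self-containment; what the paper gains is brevity. Mathematically there is nothing to compare, since the cited theorem is established by essentially the same convex-combination manipulation you wrote out.
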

\begin{proof}
See \cite[Thm. 1.1.2]{schneider2013convex}.
\end{proof}

\subsection{A Lower Bound}

Let us next bound $g_\cW$ on the basis that the maximum is at least the average and that the mapping
\begin{align}
    \theta \mapsto \max\limits_{w \in \cW}\big\langle e^{j\theta}, w \big\rangle \label{support_function_on_unit_circle}
\end{align}
is periodic, with period $2\pi$.

% \heedong{Pigeonhole principle \cite{dijkstra1991undeserved} and probabilistic methods in combinatorics are in the same vein. Shannon's channel coding theorem also falls under the same umbrella!}

\begin{theorem}
\label{thm:main_result}
The inequality in \eqref{main_result} holds with
\begin{align}
 \text{constant} =  \frac{1}{2\pi}\int_0^{2\pi} \max_{w \in \cW} \big\langle e^{j\theta}, w \big\rangle d\theta. \label{perimeter_in_disguise}
\end{align}
\end{theorem}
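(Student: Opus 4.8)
The plan is to combine the decoupled expression \eqref{one_dimensional_search} with the single elementary fact that the maximum of a function over an interval is at least its average over that interval. Abbreviate the restriction of the support function of $\conv\,\cW$ to the unit circle by $f(\theta) \equiv \max_{w\in\cW}\langle e^{j\theta}, w\rangle$, so that \eqref{one_dimensional_search} becomes
\begin{align}
    g_\cW = \max_{\theta\in[0,2\pi]} \sum_n |h_n|\, f(\theta-\theta_n).
\end{align}
First I would lower-bound the right-hand side by its average over $\theta\in[0,2\pi]$ and interchange the (finite) sum with the integral, obtaining $g_\cW \geq \sum_n |h_n|\, \frac{1}{2\pi}\int_0^{2\pi} f(\theta-\theta_n)\, d\theta$.

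The second step is to invoke the $2\pi$-periodicity of $f$ noted in the lead-up to the theorem: for each $n$ the shift $\theta\mapsto\theta+\theta_n$ gives $\int_0^{2\pi} f(\theta-\theta_n)\, d\theta = \int_0^{2\pi} f(\theta)\, d\theta$, a value that no longer depends on $n$. Pulling it out of the sum and recognizing $\sum_n|h_n| = g_\text{ideal}$ from \eqref{ideal_beamforming_gain} yields
\begin{align}
    g_\cW \geq \left(\sum_n |h_n|\right) \frac{1}{2\pi}\int_0^{2\pi} f(\theta)\, d\theta = g_\text{ideal}\cdot\frac{1}{2\pi}\int_0^{2\pi}\max_{w\in\cW}\big\langle e^{j\theta}, w\big\rangle\, d\theta,
\end{align}
which is precisely \eqref{main_result} with the constant claimed in \eqref{perimeter_in_disguise}.

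Honestly, there is no serious obstacle in this argument; the only points worth checking are that $f$ is well defined — which it is, because $\cW$ is compact, as already observed after \eqref{support_function} — and that it is integrable over $[0,2\pi]$, which follows since a support function restricted to the unit circle is continuous (indeed $1$-Lipschitz, as $|f(z)-f(z')|\leq \max_{w\in\cW}|w|\,|z-z'|\leq|z-z'|$ for $z,z'$ on the unit circle). The genuinely substantive work is not here but in the subsequent identification of $\frac{1}{2\pi}\int_0^{2\pi} f(\theta)\, d\theta$ with $(\text{perimeter of }\conv\,\cW)/2\pi$ in \eqref{best_constant}, via Cauchy's formula expressing the perimeter of a planar convex body as the average of its support function over all directions; that is where the convex-geometry content lies, and I would treat it as the next step rather than as part of proving Theorem~\ref{thm:main_result}.
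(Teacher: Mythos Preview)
Your argument is correct and matches the paper's own proof essentially step for step: start from \eqref{one_dimensional_search}, replace the maximum over $\theta$ by the average, and use the $2\pi$-periodicity of the support function to eliminate the shifts $\theta_n$. The additional remarks you make about well-definedness and integrability of $f$ are fine but not needed for the paper's level of rigor.
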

\begin{proof}
See App. \ref{Ferretti}.
\end{proof}

In the special case that $\cW = \{0,1\}$, it can be verified that 
\begin{align}
    \max_{w \in \cW}\big\langle e^{j\theta}, w \big\rangle = \max(\cos\theta, 0),
\end{align}
whose integral over $[0,2\pi]$ equals $2$. This turns Thm. \ref{thm:main_result} into
\begin{align}
    %\max_{\bw\in\{0,1\}^N}\bigg|\sum_{n} h_n
  g_\cW \geq   \frac{1}{\pi} \sum_n |h_n|.
\end{align}
In the context of RIS-assisted communication, a similar result is available in \cite[Thm.~2]{arun2020rfocus}, but only for $N \to \infty$ and under certain assumptions on the
values that the channel can take.
%support of the channel distribution.
% These assumptions are actually unnecessary (see App. \ref{app:rfocus}).
This special case, $\cW = \{0,1\}$,
is further presented in \cite{kaufman1966inequality, bledsoe1970inequality, daykin1974sets}, in the slightly different (but ultimately equivalent) form
%\angel{This seems more general than binary} \heedong{This is exactly identical to the binary case as
%\begin{align*}
%    \max_{S\subset\{1,2,\ldots,N\}}\bigg|\sum_{n\in S} h_n \bigg| = \max_{\bw \in \{0,1\}^N} \bigg| \sum_n w_n h_n \bigg|,
%\end{align*}
%where
%\begin{align*}
%    w_n = \begin{cases}
%        0 & \text{if }n\not\in S\\
%        1 & \text{if }n \in S
%    \end{cases}.
%\end{align*}
%}
of showing
%\begin{align}
%    \max_{S\subset\{1,\ldots,N\}}\bigg|\sum_{n\in S} h_n \bigg| \geq \frac{1}{\pi} \sum_n |h_n| .
%    
%\end{align}
%As a matter of fact, the problem of showing
that there exists a subset $S\subset\{1,\ldots,N\}$ satisfying
\begin{align}
    \bigg|\sum_{n\in S} h_n \bigg| \geq \frac{1}{\pi} \sum_n |h_n|. \label{binary_case}
\end{align}
This formulation appears in various textbooks as an exercise \cite[Ch. VIII, Exercise 3.1]{bourbaki1966general}, \cite[Exercise 14.9]{steele2004cauchy}. A weaker version with a constant $\frac{1}{6}$ appeared in the 1986 Chinese Mathematical Olympiad \cite[pp. 23]{liu1998chinese}.

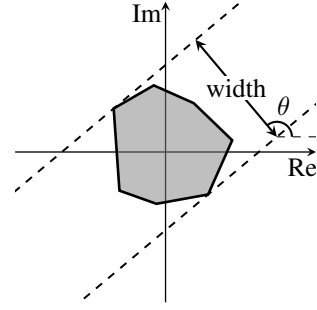
\begin{figure}
    \centering
    \begin{tikzpicture}[>=stealth]
    \begin{scope}
        \clip (-2,-2) rectangle (2,2);
        % xy-plane
        \draw[line width=0.5pt, ->] (-2,0) -- (2,0) node (X) [right,xshift = -0.5 cm, yshift = -0.2 cm]{Re};
        \draw[line width=0.5pt, ->] (0,-2) -- (0,2) node[above,xshift = -0.25 cm, yshift = -0.4 cm]{Im};
        % set
        \coordinate (P0) at (10:0.9);
        \coordinate (P1) at (60:0.75);
        \coordinate (P2) at (100:0.9);
        \coordinate (P3) at (140:0.9);
        \coordinate (P4) at (175:0.5);
        \coordinate (P5) at (220:0.8);
        \coordinate (P6) at (260:0.7);
        \coordinate (P7) at (315:0.8);
        \filldraw[fill=gray, fill opacity=0.5, line width=1pt] 
        (P0) -- (P1) -- (P2) -- (P3) -- (P5) -- (P6) -- (P7) -- cycle;

        \def\alpha{40}
        \draw[dashed, line width = 0.75pt]($(P7) - 3*({cos(\alpha)},{sin(\alpha)})$)--($(P7) + 3*({cos(\alpha)},{sin(\alpha)})$);
        \draw[dashed, line width = 0.75pt]($(P3) - 3*({cos(\alpha)},{sin(\alpha)})$)--($(P3) + 3*({cos(\alpha)},{sin(\alpha)})$);

        \coordinate (Start) at ($(P3) + (\alpha:1.42)$);
        \coordinate (End) at ($(P7)!(Start)!($(P7)+(\alpha:1)$)$);
        \draw[<->, thick] (Start) -- (End) node[midway, fill=white, inner sep=1pt] {width};
        \draw[dashed] (End) -- ($(End) + (1,0)$);
        \draw[thick] ($(End) + (0.2,0)$) arc (0:130:0.2) node[above right]{$\theta$};
    \end{scope}
    \end{tikzpicture}
    \caption{Width of a convex set in the direction specified by $\theta$.}
    \label{fig:width}
\end{figure}

\subsection{Geometric Interpretation}

Results from convex geometry allow interpreting the constant in \eqref{perimeter_in_disguise} geometrically. The width of a compact convex set $S\subset \bbC$ in the direction of $\theta$ is defined as \cite[Sec. 1.7]{schneider2013convex}
\begin{align}
    % &\max_{w\in S}\langle e^{j\theta}, w \rangle - \min_{w\in S}\langle e^{j\theta}, w \rangle\\
    % &\qquad=\max_{w\in S}\langle e^{j\theta}, w \rangle + \max_{w\in S}\langle e^{j(\theta+\pi)}, w \rangle\\
    &\max_{w\in\cW}\big\langle e^{j\theta}, w \big\rangle + \max_{w\in\cW}\big\langle e^{j(\theta+\pi)}, w \big\rangle
\end{align}
and illustrated in Fig. \ref{fig:width}.
The mean width of $S$ is then
\begin{align}
    & \!\! \frac{1}{2\pi}\int_0^{2\pi} \left( \max_{w\in\cW}\big\langle e^{j\theta}, w \big\rangle + \max_{w\in\cW}\big\langle e^{j(\theta+\pi)}, w \big\rangle \right) d\theta  \\
    &\qquad\qquad\qquad\qquad\qquad\qquad  = \frac{1}{\pi}\int_0^{2\pi} \max_{w\in\cW}\big\langle e^{j\theta}, w \big\rangle\, d\theta
    \nonumber
\end{align}
from the periodicity of \eqref{support_function_on_unit_circle}. One can relate this mean width with the perimeter of $S$ using the Cauchy's surface area formula \cite[Sec. 5.3]{schneider2013convex}, whose two-dimensional instance is
presented next as Thm. \ref{thm:cauchy}; see \cite{vouk1948projected, meltzer1949shadow} for intuitive explanations. Application of the formula directly gives, as desired,
\begin{align}
    \frac{1}{2\pi}\int_0^{2\pi} \max_{w\in\cW}\big\langle e^{j\theta}, w \big\rangle d\theta =  \frac{\text{perimeter of }\conv\,\cW}{2\pi} . \label{the_constant}
\end{align}

\begin{theorem} \textit{(Cauchy's surface area formula)}
\label{thm:cauchy}
For a compact convex set $S\subset\bbC$,
\begin{align}
    \text{mean width of $S$} = \frac{\text{perimeter of $S$}}{\pi}.
\end{align}
\end{theorem}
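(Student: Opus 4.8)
The plan is to reduce to the case of convex polygons by a density argument and then compute both sides directly. Both quantities in play---the mean width, which is $\frac{1}{2\pi}\int_0^{2\pi}\big(\max_{w\in S}\langle e^{j\theta},w\rangle + \max_{w\in S}\langle e^{j(\theta+\pi)},w\rangle\big)\,d\theta$, and the perimeter---are continuous on the family of compact convex subsets of $\bbC$ under the Hausdorff metric: under Hausdorff convergence the support function $\theta\mapsto\max_{w\in S}\langle e^{j\theta},w\rangle$ converges uniformly on the unit circle, so the mean width passes to the limit, while continuity of the perimeter for convex bodies is classical and may be cited from \cite{schneider2013convex}. Since convex polygons are dense in this family (and degenerate sets such as segments and single points arise as Hausdorff limits of polygons), it suffices to prove the identity for a convex polygon $P$.

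Next, for such a $P$, I would write $\partial P$ as a counterclockwise cycle of edge vectors $e_1,\dots,e_k$ with $\ell_i=|e_i|$ and $e_i=\ell_i e^{j\phi_i}$, and establish the pointwise identity
\[ \max_{w\in P}\langle e^{j\theta},w\rangle + \max_{w\in P}\langle e^{j(\theta+\pi)},w\rangle \;=\; \tfrac{1}{2}\sum_{i=1}^{k}\ell_i\,\big|\cos(\theta-\phi_i)\big| . \]
The left-hand side is the length of the orthogonal projection of $P$ onto the line through the origin in direction $e^{j\theta}$, equivalently $\max_{w\in P}\langle e^{j\theta},w\rangle-\min_{w\in P}\langle e^{j\theta},w\rangle$. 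To evaluate it, trace $\partial P$ once and track the real number $\langle e^{j\theta},w\rangle$: along edge $e_i$ it changes by $\langle e^{j\theta},e_i\rangle=\ell_i\cos(\theta-\phi_i)$. Because $P$ is convex, this function is unimodal along the boundary---it increases monotonically from its minimum to its maximum along one of the two complementary arcs and decreases along the other---so its total variation around the closed curve is twice the projection length; since that total variation equals $\sum_i\ell_i|\cos(\theta-\phi_i)|$, the identity follows.

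Finally, integrating over $\theta\in[0,2\pi]$, exchanging the finite sum with the integral, and using $\int_0^{2\pi}|\cos(\theta-\phi_i)|\,d\theta=4$ for every $\phi_i$, one would obtain
\[ \frac{1}{2\pi}\int_0^{2\pi}\Big(\max_{w\in P}\langle e^{j\theta},w\rangle + \max_{w\in P}\langle e^{j(\theta+\pi)},w\rangle\Big)d\theta \;=\; \frac{1}{4\pi}\sum_{i=1}^{k}\ell_i\cdot 4 \;=\; \frac{1}{\pi}\sum_{i=1}^{k}\ell_i \;=\; \frac{\text{perimeter of }P}{\pi}, \]
which is the assertion for polygons; the general statement then follows from the density and continuity step above.

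I expect the main obstacle to be making the unimodality claim airtight: one must argue that the linear functional $\langle e^{j\theta},\cdot\rangle$, restricted to $\partial P$, attains its maximum and its minimum each on a single edge (or vertex) and is monotone on each of the two arcs between those extrema, so that the total variation is \emph{exactly} twice the width rather than a larger even multiple. This is elementary for a convex polygon but should be stated carefully---most cleanly by invoking that a convex set meets any line in a single segment, so the faces achieving the support value in directions $e^{j\theta}$ and $-e^{j\theta}$ split the boundary into two monotone pieces. A minor secondary point is the continuity of perimeter and mean width under Hausdorff convergence used in the reduction; both follow from uniform convergence of support functions on the unit circle and may simply be referenced.
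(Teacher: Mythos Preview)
The paper does not actually prove this theorem: it is stated as the classical Cauchy surface area formula, with references to \cite{schneider2013convex} and the expository notes \cite{vouk1948projected, meltzer1949shadow}, and then used as a tool. So there is no in-paper proof to compare against.

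Your argument is correct and self-contained. The reduction to polygons via Hausdorff density is standard and well justified (uniform convergence of support functions gives continuity of the mean width; continuity of perimeter for convex bodies is in \cite{schneider2013convex}). The core identity---that the width of a convex polygon in direction $e^{j\theta}$ equals $\tfrac{1}{2}\sum_i \ell_i\,|\cos(\theta-\phi_i)|$---is established cleanly by your total-variation argument, and the unimodality you worry about is indeed straightforward: because the edge directions $\phi_1,\ldots,\phi_k$ of a convex polygon traversed counterclockwise increase monotonically through a full turn, the signs of $\cos(\theta-\phi_i)$ flip exactly twice, so the boundary splits into one arc on which $\langle e^{j\theta},\cdot\rangle$ increases and one on which it decreases. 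The integration step is routine. In short, you have supplied a valid elementary proof where the paper simply cites the result.
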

%\begin{proof}
%It is the two-dimensional instance of the Cauchy's surface area formula.
%\end{proof}

\subsection{Comparison with an Existing Result}

% (see \cite{zhang2020reconfigurable} for a similar result)
For the special case of $\cW=\cW_M$, the bound in \eqref{main_result} is available in
\cite[Prop. 2]{han2019large}, albeit with the looser constant
\begin{align}
    \min_{\theta \in [0,2\pi]} \max_{w\in\cW_M}\big\langle e^{j\theta}, w \big\rangle = \cos\frac{\pi}{M}.\label{crude_constant}
\end{align}
% and \cite[Sec. IV-B]{xu2022performance}
This result readily follows from \eqref{one_dimensional_search} via
\begin{align}
    g_\cW %\max_{\bw \in \cW^N} \bigg| \sum_n w_n h_n \bigg|
    & = \max_{\theta\in [0,2\pi]}    \sum_n |h_n| \, \max_{w\in\cW_M}\big\langle e^{j(\theta-\theta_n)}, w \big\rangle  \\
    &\geq \min_{\theta\in [0,2\pi]}  \sum_n |h_n| \, \max_{w\in\cW_M}\big\langle e^{j(\theta-\theta_n)}, w \big\rangle  \\
    &\geq  \sum_n |h_n| \, \min_{\theta\in [0,2\pi]} \max_{w\in\cW_M}\big\langle e^{j(\theta-\theta_n)}, w \big\rangle  \\
    & =  \min_{\theta \in [0,2\pi]} \max_{w\in\cW_M}\big\langle e^{j\theta}, w \big\rangle \cdot \sum_n |h_n| \\
    & = \cos \frac{\pi}{M} \cdot g_\text{ideal}    .
    \label{prior_art_bound}
\end{align}
The constant in \eqref{crude_constant} is indeed looser than \eqref{best_constant} because the average is at least the minimum, that is,
\begin{align}
    \frac{1}{2\pi}\int_0^{2\pi}  \max_{w\in\cW_M}\big\langle e^{j\theta}, w \big\rangle \,d\theta \geq \min_{\theta \in [0,2\pi]}  \max_{w\in\cW_M}\big\langle e^{j\theta}, w \big\rangle.
\end{align}
As \eqref{crude_constant} is the radius of the circle inscribed in $\conv\,\cW_M$, the inequality corresponds (see Fig. \ref{fig:constant_comparison}) to
\begin{align}
    &2\pi\cdot\text{(radius of the inscribed circle)}  \leq \text{perimeter of $\conv\,\cW$}.\nonumber
\end{align}

\begin{figure}
    \centering
    \begin{tikzpicture}[>=stealth]
    \begin{scope}[]
        \clip (-1.7,-1.7) rectangle (1.7,1.7);
        
        % set
        \node[regular polygon, regular polygon sides=6, shape border rotate=0, minimum size=2cm, draw=black, fill=black, line width=1pt, fill opacity = 0.6] at (0,0) {};
        \filldraw[line width=1pt, draw=black, fill=white, fill opacity=0.8] (0,0)  circle (0.8660cm);
        \draw[dashed, line width=0.75pt, color=black] (0,0)  circle (1cm);

        % xy-plane
        \draw[line width=0.5pt, ->] (-1.7,0) -- (1.7,0) node (X) [right,xshift = -0.5 cm, yshift = -0.2 cm]{Re};
        \draw[line width=0.5pt, ->] (0,-1.7) -- (0,1.7) node[above,xshift = -0.25 cm, yshift = -0.4 cm]{Im};
        
        \foreach \k in {0,...,5}{
            \filldraw ({cos(60*\k)}, {sin(60*\k)}) circle (1.5pt) node{};
        };
    \end{scope}
    \end{tikzpicture}
    \caption{Visual comparison of the constants in \eqref{crude_constant} and \eqref{best_constant}, which respectively correspond to the perimeter of the circle in light gray and to
    the perimeter of the regular $M$-gon in charcoal gray, both  divided by $2\pi$. As a reference, the unit circle is shown as a dashed line.}
    \label{fig:constant_comparison}
\end{figure}
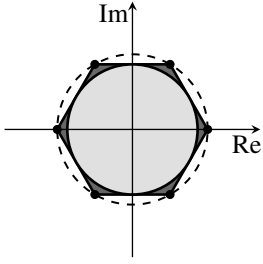

The difference between the constants in \eqref{crude_constant} and \eqref{best_constant} can be
conveniently quantified asymptotically. For large $M$, \eqref{crude_constant} expands as
\begin{align}
    1 - \frac{\pi^2}{2} \cdot \frac{1}{M^2} + O\bigg(\frac{1}{M^4}\bigg)
\end{align}
whereas, for a regular $M$-gon, \eqref{best_constant} becomes\footnote{The expansion in \eqref{asymptotic_expression} coincides with that of a phase quantization error in the array processing literature \cite[Eqn. 75]{mailloux1982phased}.}
\begin{align}
    \frac{1}{2\pi}\cdot 2M\sin\frac{\pi}{M}
    % &=\frac{1}{2\pi}\cdot 2M \Bigg(\frac{\pi}{M} - \frac{1}{6}\bigg(\frac{\pi}{M}\bigg)^3 + O\bigg(\frac{1}{M^5}\bigg)\Bigg)\\
    & =1 - \frac{\pi^2}{6}\cdot \frac{1}{M^2} + O\bigg(\frac{1}{M^4}\bigg). \label{asymptotic_expression}
\end{align}
The latter approaches unity three times faster.
%In terms of the deficit, there is a $3\times$ gap between the expressions.

% \begin{corollary}
% \label{cor:subset_bound}
% Let $h_1,h_2, \ldots, h_N\in \bbC$. Then, there exists $S \subset \{1,2,\ldots,N\}$ satisfying
% \begin{align}
%     \bigg|\sum_{n\in S} h_n \bigg| \geq \frac{1}{\pi} \sum_n |h_n|.
% \end{align}
% % Furthermore, if the constant $\frac{1}{\pi}$ is replaced by smaller one, there exist $N$ and $h_1,h_2, \ldots, h_N\in \bbC$ which make the statement false.
% \end{corollary}

% Cor. \ref{cor:subset_bound} corresponds to the beamforming problem with on-off phase shifters.

\subsection{Tightness}

As the next result formalizes, for Thm. \ref{thm:main_result} to hold for any $N$ and $h_1, \ldots, h_N$, the constant in \eqref{perimeter_in_disguise} cannot be further tightened.
\begin{theorem}
\label{thm:optimality_of_constant}
If \eqref{main_result} holds for any $N$ and $h_1, \ldots, h_N$, then
\begin{align}
    \text{constant} \leq  \frac{1}{2\pi}\int_0^{2\pi} \max\limits_{w \in \cW}\big\langle e^{j\theta}, w \big\rangle  d \theta.
\end{align}
% For any constant larger than \eqref{perimeter_in_disguise},
% there exist $N$ and $h_1, \ldots, h_N$ such that
% \begin{align}
%     % \max_{\bw \in \cW^N} \bigg| \sum_n w_n h_n \bigg| < \text{constant} \cdot \sum_n |h_n|. 
%     g_\cW < \text{constant} \cdot g_\text{ideal}.\label{optimality_of_the_constant}
% \end{align}
\end{theorem}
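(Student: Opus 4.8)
The plan is to establish optimality of the constant by \emph{exhibiting} a family of channels for which \eqref{main_result} is asymptotically tight, so that no larger constant can be valid for all $N$ and all $h_1,\ldots,h_N$. Introduce the shorthand $f(\phi) := \max_{w\in\cW}\langle e^{j\phi},w\rangle$, i.e.\ the support function of $\conv\,\cW$ evaluated on the unit circle. Since $\cW$ is compact, $f$ is finite and $2\pi$-periodic; moreover it is $1$-Lipschitz, because $|f(\phi)-f(\phi')|\le\max_{w\in\cW}|\langle e^{j\phi}-e^{j\phi'},w\rangle|\le|e^{j\phi}-e^{j\phi'}|\cdot\max_{w\in\cW}|w|\le|\phi-\phi'|$. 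In this notation the decoupling \eqref{one_dimensional_search} reads $g_\cW=\max_{\theta\in[0,2\pi]}\sum_n|h_n|\,f(\theta-\theta_n)$, which is the only structural fact about $g_\cW$ I will need.

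Next I would, for each $N$, take the equispaced unit-modulus channel $h_n=e^{j2\pi n/N}$, $n=0,\ldots,N-1$, so that $|h_n|=1$ and $\theta_n=2\pi n/N$. Then $g_{\text{ideal}}=N$ by \eqref{ideal_beamforming_gain}, while $g_\cW=\max_{\theta}\sum_{n=0}^{N-1}f(\theta-2\pi n/N)$. For every fixed $\theta$ the average $\frac1N\sum_{n=0}^{N-1}f(\theta-2\pi n/N)$ is a Riemann sum for $\frac1{2\pi}\int_0^{2\pi}f(\theta-\phi)\,d\phi$, which equals $\frac1{2\pi}\int_0^{2\pi}f(\psi)\,d\psi$ by periodicity; partitioning $[0,2\pi]$ into $N$ arcs of length $2\pi/N$ and using the $1$-Lipschitz bound gives the estimate
\begin{align}
\bigg|\frac1N\sum_{n=0}^{N-1}f(\theta-2\pi n/N)-\frac1{2\pi}\int_0^{2\pi}f(\psi)\,d\psi\bigg|\le\frac{2\pi}{N},\qquad\forall\,\theta,
\end{align}
uniformly in $\theta$. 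Taking $\max_\theta$ on the left, this yields $\big|\,g_\cW/N-\frac1{2\pi}\int_0^{2\pi}\max_{w\in\cW}\langle e^{j\theta},w\rangle\,d\theta\,\big|\le 2\pi/N$, so $g_\cW/N$ converges to exactly the constant in \eqref{perimeter_in_disguise}.

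Finally I would close the argument: if \eqref{main_result} holds for every $N$ and every $h_1,\ldots,h_N$, then in particular for the equispaced channels above it gives $\text{constant}\le g_\cW/g_{\text{ideal}}=g_\cW/N$ for each $N$; letting $N\to\infty$ and invoking the estimate just derived yields $\text{constant}\le\frac1{2\pi}\int_0^{2\pi}\max_{w\in\cW}\langle e^{j\theta},w\rangle\,d\theta$, as claimed. The only point that genuinely requires care is that the Riemann-sum convergence must be uniform in $\theta$, since it has to survive the outer maximization defining $g_\cW$; this is precisely what the Lipschitz property of the support function delivers, so I do not anticipate a real obstacle. (One could instead argue probabilistically, drawing $\{h_n\}$ i.i.d.\ circularly symmetric and invoking the large-$N$ concentration alluded to in Sec.~\ref{sec_main_result}, but the deterministic equispaced construction is cleaner and self-contained.)
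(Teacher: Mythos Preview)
Your proposal is correct and follows essentially the same route as the paper: pick the equispaced unit-modulus channel $h_n=e^{j2\pi n/N}$, use the decoupling \eqref{one_dimensional_search}, establish uniform convergence of the Riemann sums so that the outer $\max_\theta$ and the limit $N\to\infty$ can be interchanged, and pass to the limit in \eqref{main_result}. The only cosmetic difference is that the paper invokes uniform continuity of the support function on a compact set (and cites a general Riemann-sum convergence result), whereas you prove the stronger $1$-Lipschitz property directly and obtain the explicit $2\pi/N$ rate; your version is therefore slightly more self-contained and quantitative, but the underlying idea is identical.
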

\begin{proof}
See App. \ref{rain}.
\end{proof}

% A vector extension is also possible. \heedong{Do you think there's an interpretation of this inequality in wireless context?}
% \begin{theorem}
% \label{thm:vector_extension}
% Let $\bh_1,\bh_2, \ldots, \bh_N\in \bbR^d$. Then, there exists a subset $S$ of $\{1,2,\ldots,N\}$ satisfying
% \begin{align}
%     \Bigg\|\sum_{n\in S} \bh_n \Bigg\| \geq \frac{1}{2\sqrt{\pi}}\frac{\Gamma(d/2)}{\Gamma((d+1)/2)} \sum_n \|\bh_n\|.
% \end{align}
% \end{theorem}
% \begin{proof}
% See \cite{kaufman1966inequality}.
% \end{proof}

% Enter fading channels. The average beamforming gain has been analyzed in a number of studies, with particular emphasis on the large-$N$ regime. This section unifies and generalizes these prior results. 
% Interestingly, the constant in \eqref{best_constant} again plays a central role. 

%which had gone unnoticed in the existing literature.
% \angel{Since we're dealing with the large-$N$ regime, perhaps we could capitalize on the hardening and do away with the expectations altogether?} \heedong{I don't quite get this one. I'll stop by your office.}

% \subsection{Optimum Beamforming}

Although Thm. \ref{thm:optimality_of_constant} affirms that the constant in \eqref{perimeter_in_disguise} cannot be further tightened, it is, at its heart, a worst-case analysis. (In some cases, e.g., $\cW$ intersecting with the unit circle and $h_1 = \cdots = h_n$, the ideal beamforming gain can even be attained.) Transcending the notion of worst case to embrace other notions, say the average beamforming gain, requires accepting that the channel is an instance of a fading distribution. 
%This calls for an average case analysis, which naturally requires the channel to be random. 

\subsection{Fading Channels}

Let us posit that $\{h_n\}$ are drawn from an i.i.d. circularly symmetric distribution, say they are i.i.d. realizations of $h$ with $\bbE\big[|h|\big]<\infty$. As $N$ grows large, light can be shed on the beamforming gain that goes beyond its worst-case value.
% For the optimum beamforming considered henceforth, as per \eqref{original_problem}, the following theorem establishes how the beamforming gain hardens as $N$ grows large.
%
%as per \eqref{main_result}--\eqref{best_constant}, it holds that
%\begin{align}
%    \frac{\bbE[g_{\cW}]}{N} &\geq \frac{\bbE[g_{\text{ideal}}]}{N} \cdot \frac{\text{perimeter of }\conv\,\cW}{2\pi}\\
%    &=\bbE\big[|h|\big] \cdot \frac{\text{perimeter of }\conv\,\cW}{2\pi}.
%\end{align}
%The following theorem asserts that this bound becomes an equality for growing $N$.
Specifically, as formalized next, the bound established in Thm.~\ref{thm:main_result} 
(with both sides normalized by $N$ to keep it from diverging)
becomes tight, in the sense that both sides converge to the same nonrandom limit
\begin{align}
    \bbE\big[|h|\big] \cdot \frac{1}{2\pi}\int_0^{2\pi} \max\limits_{w \in \cW}\big\langle e^{j\theta}, w \big\rangle  d \theta. \label{constant}
\end{align}

\begin{theorem}
For $N \to \infty$, the sequence of random variables $\frac{g_{\cW}}{N}$ converges almost surely (a.s.) to \eqref{constant}.
\end{theorem}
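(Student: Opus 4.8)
The plan is to work from the one-dimensional reformulation in \eqref{one_dimensional_search}, which states $g_\cW = \max_{\theta\in[0,2\pi]} \sum_n |h_n|\,\varphi(\theta-\theta_n)$, where $\varphi(\psi) = \max_{w\in\cW}\langle e^{j\psi}, w\rangle$ is the support function of $\conv\,\cW$ restricted to the unit circle. This $\varphi$ is continuous and $2\pi$-periodic, and, by \eqref{the_constant}, its mean over a period is $\bar\varphi := \tfrac{1}{2\pi}\int_0^{2\pi}\varphi(\psi)\,d\psi = \tfrac{\text{perimeter of }\conv\,\cW}{2\pi}$. Writing $f_N(\theta) := \tfrac{1}{N}\sum_n |h_n|\,\varphi(\theta-\theta_n)$, so that $\tfrac{g_\cW}{N} = \max_\theta f_N(\theta)$, the goal is equivalent to $\max_\theta f_N(\theta) \to \bbE[|h|]\,\bar\varphi$ a.s., and I would in fact prove the stronger claim that $f_N \to \bbE[|h|]\,\bar\varphi$ \emph{uniformly} in $\theta$, a.s.

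The first step is pointwise convergence. Fix $\theta$. The summands $|h_n|\,\varphi(\theta-\theta_n)$ are i.i.d.\ and dominated by $|h_n|$, which is integrable by hypothesis, so the strong law of large numbers applies. Because $h$ is circularly symmetric, conditioning on $|h_n|$ makes its phase $\theta_n$ uniform on $[0,2\pi)$ and independent of $|h_n|$; hence, using the $2\pi$-periodicity of $\varphi$, $\bbE[|h_n|\,\varphi(\theta-\theta_n)] = \bbE[|h_n|]\cdot\tfrac{1}{2\pi}\int_0^{2\pi}\varphi(\theta-\psi)\,d\psi = \bbE[|h|]\,\bar\varphi$, a quantity independent of $\theta$. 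Therefore $f_N(\theta)\to\bbE[|h|]\,\bar\varphi$ a.s.\ for every fixed $\theta$.

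The second step is equicontinuity, and then a net argument concludes. Since $\cW\subset\{|w|\le1\}$, its support function is $1$-Lipschitz along the unit circle: $|\varphi(\psi)-\varphi(\psi')|\le|e^{j\psi}-e^{j\psi'}|\le|\psi-\psi'|$. Consequently $|f_N(\theta)-f_N(\theta')|\le A_N\,|\theta-\theta'|$ with $A_N := \tfrac{1}{N}\sum_n|h_n|$, and $A_N\to\bbE[|h|]$ a.s.\ by the strong law; thus, outside a single null event, the $f_N$ are eventually uniformly Lipschitz, say with constant $2\,\bbE[|h|]$. Now fix $\epsilon>0$ and a finite $\epsilon$-net $\Theta_\epsilon\subset[0,2\pi]$: a.s., for all large $N$, $\sup_\theta|f_N(\theta)-\bbE[|h|]\bar\varphi| \le \max_{\vartheta\in\Theta_\epsilon}|f_N(\vartheta)-\bbE[|h|]\bar\varphi| + 2\,\bbE[|h|]\,\epsilon$; letting $N\to\infty$ and invoking the pointwise convergence on the \emph{finite} set $\Theta_\epsilon$ gives $\limsup_N\sup_\theta|f_N(\theta)-\bbE[|h|]\bar\varphi|\le 2\,\bbE[|h|]\,\epsilon$ a.s., and sending $\epsilon\to0$ along a sequence (a countable union of null events) yields the uniform a.s.\ convergence. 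In particular $\tfrac{g_\cW}{N}=\max_\theta f_N(\theta)\to\bbE[|h|]\,\bar\varphi$, i.e.\ the limit in \eqref{constant}.

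I expect the only genuine difficulty to be upgrading the pointwise strong law to a statement that is uniform over the continuum of directions $\theta$; the crude bound $\tfrac{g_\cW}{N}\le\big(\max_\psi\varphi(\psi)\big)\cdot\tfrac{1}{N}\sum_n|h_n|$ is too loose whenever $\cW$ is not the full unit circle, so some equicontinuity input is essential. The Lipschitz estimate on $\varphi$—a free consequence of $\cW$ lying inside the unit disk—is precisely what bridges this gap, after which only routine measure-theoretic bookkeeping over finitely (then countably) many null sets remains.
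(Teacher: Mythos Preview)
Your proof is correct and follows essentially the same approach as the paper: both start from \eqref{one_dimensional_search}, apply the strong law of large numbers pointwise in $\theta$, use the Lipschitz continuity of the support function (with constant controlled by $\frac{1}{N}\sum_n|h_n|$) to obtain equicontinuity, and then upgrade pointwise convergence on a countable dense set to uniform convergence. The only cosmetic difference is that the paper packages the last step as the Arzel\`a--Ascoli propagation theorem over $[0,2\pi]\cap\bbQ$, whereas you carry out the equivalent $\epsilon$-net argument by hand.
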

\begin{proof}
See App. \ref{app:law_of_large_numbers}.
\end{proof}

The expression in \eqref{constant_in_dB} therefore represents, for large $N$, not only the worst-case shortfall, but actually the shortfall attained a.s. for every channel realization. And, as formalized next, it also represents the average shortfall under the sole proviso that, as dictated by physics, $\E[|h|^2] < \infty$.

%With an assumption that $\bbE\big[|h|^p\big]<\infty$ for some $p\geq 1$, which is stronger than $\bbE\big[|h|\big]<\infty$ by means of the Jensen's inequality, the result extends to the convergence in mean of order $p$. 
%It is worth noting that $\bbE\big[|h|^p\big]<\infty$ for all $p\geq 1$ if $|h|$ is sub-Gaussian.
\begin{theorem}
If $\bbE\big[|h|^p\big]<\infty$ for some $p\geq 1$, the sequence of random variables $\frac{g_{\cW}}{N}$ converges in mean of order $p$ to \eqref{constant},
\begin{align}
    \frac{\bbE \big[g^p_{\cW} \big]}{N^p} \rightarrow \big(\bbE\big[|h|\big]\big)^p \cdot \left(\frac{1}{2\pi}\int_0^{2\pi} \max\limits_{w \in \cW}\big\langle e^{j\theta}, w \big\rangle  d \theta\right)^{\!p}. \label{second_part}
\end{align}
\end{theorem}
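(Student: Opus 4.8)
The plan is to promote the almost-sure limit $g_\cW/N \to c$ furnished by the preceding theorem --- writing $c$ for the quantity in \eqref{constant} --- to convergence in the $p$-th mean by exhibiting a convergent dominating sequence and appealing to a generalized (Pratt) dominated convergence theorem. The key pointwise bound is that $\cW \subset \{w : |w| \le 1\}$ forces $g_\cW \le g_\text{ideal} = \sum_n |h_n|$ for every channel realization, so that
\begin{align}
0 \;\le\; \Big(\frac{g_\cW}{N}\Big)^{\!p} \;\le\; \Big(\frac{1}{N}\sum_n |h_n|\Big)^{\!p} \;=:\; Y_N .
\end{align}
Here $\bbE[|h|] \le \big(\bbE[|h|^p]\big)^{1/p} < \infty$ by Jensen, so the i.i.d.\ variables $|h_n|$ are integrable and the preceding almost-sure convergence theorem indeed applies.

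Next I would invoke the $L^p$ law of large numbers: for i.i.d.\ $\{|h_n|\}$ with $\bbE[|h|^p] < \infty$, $p \ge 1$, the empirical mean $\frac1N\sum_n |h_n|$ tends to $\bbE[|h|]$ both almost surely and in $L^p$. Raising to the $p$-th power, $Y_N \to \big(\bbE[|h|]\big)^p$ a.s., and since $L^p$ convergence of $\frac1N\sum_n|h_n|$ forces its $p$-th moments to converge, also $\bbE[Y_N] \to \big(\bbE[|h|]\big)^p$. The generalized dominated convergence theorem then applies with $f_N := (g_\cW/N)^p$ dominated by $Y_N$: one has $f_N \to c^p$ a.s.\ (raise the previous theorem's conclusion to the $p$-th power), $Y_N \to \big(\bbE[|h|]\big)^p$ a.s., and $\bbE[Y_N] \to \bbE\big[\lim Y_N\big]$; hence $\bbE[f_N] \to c^p$, which --- after substituting $c = \bbE[|h|] \cdot \frac{1}{2\pi}\int_0^{2\pi}\max_{w\in\cW}\langle e^{j\theta},w\rangle\,d\theta$ and clearing $N^p$ --- is exactly \eqref{second_part}. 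Convergence in the $p$-th mean itself follows identically: $|g_\cW/N - c|^p \le \big(\tfrac1N\sum_n|h_n| + c\big)^p$, the right-hand side converging a.s.\ and (being the $p$-th power of an $L^p$-convergent sequence) in $L^1$ to $(\bbE[|h|] + c)^p$, while $|g_\cW/N - c|^p \to 0$ a.s., so a second application yields $\bbE\big[|g_\cW/N - c|^p\big] \to 0$.

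The sole genuine obstacle is the $L^p$ law of large numbers invoked above; the remainder is bookkeeping. For $p \ge 2$ it is a one-line consequence of the Marcinkiewicz--Zygmund inequality, which gives $\bbE\big|\sum_n(|h_n| - \bbE[|h|])\big|^p \le C_p\, N^{p/2}\, \bbE\big||h| - \bbE[|h|]\big|^p$ and hence an $O(N^{-p/2})$ bound on the $p$-th mean error; for $1 < p < 2$ the subadditivity of $t \mapsto t^{p/2}$ converts the same inequality into an $O(N^{1-p})$ bound; and the boundary case $p = 1$ is the classical fact that the sample means of an integrable random variable form a uniformly integrable family. Alternatively one may simply cite a standard reference for the $L^p$ strong law rather than reproduce these estimates.
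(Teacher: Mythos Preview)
Your argument is correct and follows essentially the same route as the paper: dominate $(g_\cW/N)^p$ by $(g_{\text{ideal}}/N)^p = \big(\tfrac1N\sum_n|h_n|\big)^p$, establish that the dominating sequence behaves well in $L^1$, and combine this with the a.s.\ convergence from the preceding theorem. The only difference is in packaging: the paper directly cites that $\big(\tfrac1N\sum_n|h_n|\big)^p$ is uniformly integrable (Chow--Teicher, Example~4.3.1), transfers UI to the dominated sequence, and then invokes the standard ``UI $+$ convergence in probability $\Rightarrow$ $L^1$ convergence'' (Vitali), whereas you arrive at the same conclusion via the $L^p$ law of large numbers and Pratt's generalized DCT. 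Your version is slightly longer but more self-contained; the paper's is a two-line citation.
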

\begin{proof}
See App. \ref{app:uniform_integrability}.
\end{proof}

For $p=2$, the above result implies that the shortfall in average beamforming gain abides by \eqref{constant_in_dB}.

% The convergence in mean of order $p$ implies that \cite[Cor. 4.2.5]{chow1997probability}
% \begin{align}
%     \frac{\bbE[g^p_{\cW}]}{N^p} \rightarrow \big(\bbE\big[|h|\big]\big)^p \cdot \bigg(\frac{1}{2\pi}\int_0^{2\pi} \max\limits_{w \in \cW}\big\langle e^{j\theta}, w \big\rangle  d \theta\bigg)^p.
% \end{align}
%As a special case of the above, the sequence $\frac{\bbE[g^2_{\text{ideal}}]}{N^2}$ converges to $( \E[|h|] )^2$.
%whereby the shortfall in average beamforming gain converges to the same value given in \eqref{constant_in_dB} for fixed-channel beamforming.
%The paper's main result is thus relevant, not only to the instantaneous beamforming gain, but to its average value.
%Moreover, for large antenna counts, the result becomes an equality.

\section{A Geometric Proof}
\label{geometric_proof}

This section provides an alternative perspective on the problem.
Following \cite{do2026optimum}, one can recast
\begin{align}
    \max_{\bw \in \cW^N} \bigg| \sum_n w_n h_n \bigg|
\end{align}
as
\begin{align}
    \max_{w} & \quad |w|\\
    \text{s.t.} & \quad w\in h_1\cW + \cdots + h_N \cW, \nonumber
\end{align}
where the summations are Minkowski sums of sets. By means of Lemma \ref{lemma:convex_hull_and_support_function_absolute} presented below, this can be reformulated as
\begin{align}
    \max_{w} & \quad |w| \label{reformulation}\\
    \text{s.t.} & \quad w\in \conv(h_1\cW + \cdots + h_N \cW). \nonumber
\end{align}

\begin{lemma}
\label{lemma:convex_hull_and_support_function_absolute}
For any compact set $S\subset \bbC$,
\begin{align}
    \max_{w\in S}|w| = \max_{w\in \conv\,S}|w|.
\end{align}
\end{lemma}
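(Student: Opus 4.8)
The easy half, $\max_{w\in S}|w|\le\max_{w\in\conv S}|w|$, is immediate from $S\subseteq\conv S$, so the plan is to establish the reverse inequality, and the cleanest route is to piggyback on the machinery already developed in Sec.~\ref{sec_derivation}. Starting from the workhorse identity \eqref{workhorse}, I would write $\max_{w\in\conv S}|w| = \max_{w\in\conv S}\max_{\theta\in[0,2\pi]}\langle e^{j\theta},w\rangle = \max_{\theta\in[0,2\pi]}\max_{w\in\conv S}\langle e^{j\theta},w\rangle$, then invoke Lemma~\ref{lemma:convex_hull_and_support_function} to replace the inner maximum over $\conv S$ by the maximum over $S$, and finally swap the two maxima back to obtain $\max_{w\in S}\max_{\theta}\langle e^{j\theta},w\rangle = \max_{w\in S}|w|$. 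The interchange of the two maxima is legitimate because $S$ (hence $\conv S$) and $[0,2\pi]$ are compact and the integrand is continuous.

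As an alternative that avoids even the support-function formalism, one can argue directly from the convexity of the modulus. Any $w\in\conv S$ is, by definition of the convex hull, a finite convex combination $w=\sum_i\lambda_i w_i$ with $w_i\in S$, $\lambda_i\ge 0$, and $\sum_i\lambda_i=1$; the triangle inequality then gives $|w|\le\sum_i\lambda_i|w_i|\le\big(\sum_i\lambda_i\big)\max_i|w_i|\le\max_{v\in S}|v|$, the last step using that $S$ is compact so that $\max_{v\in S}|v|$ exists. Taking the maximum over $w\in\conv S$ (which exists since $\conv S$ is compact, a standard consequence of Carathéodory's theorem) completes the argument.

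There is no real obstacle here; the only points that warrant a sentence are the existence of the maxima over $S$ and over $\conv S$ (both following from compactness in $\bbR^2$) and the observation that the convex-combination representation used in the second argument is merely the definition of $\conv S$, so no appeal to Carathéodory is needed for that step itself. Either argument is only a few lines, and I would present the first one for its brevity and its reuse of Lemma~\ref{lemma:convex_hull_and_support_function}.
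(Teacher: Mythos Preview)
Your first argument is correct and is exactly the paper's proof: Appendix~\ref{loto} derives the result via the same chain $\max_{w\in S}|w| = \max_{w\in S}\max_\theta\langle e^{j\theta},w\rangle = \max_\theta\max_{w\in S}\langle e^{j\theta},w\rangle = \max_\theta\max_{w\in\conv S}\langle e^{j\theta},w\rangle = \max_{w\in\conv S}|w|$, invoking \eqref{workhorse} and Lemma~\ref{lemma:convex_hull_and_support_function}. Your alternative triangle-inequality argument is also fine but is not the route the paper takes.
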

\begin{proof}
See App. \ref{loto}.
\end{proof}

% Another useful result is that the support function of the Minkowski sum of two sets is the sum of two support functions.
% \begin{lemma}
% \label{lemma:support_function_and_minkowski_sum}
% For compact sets $S_1$ and $S_2$ in $\bbC$,
% \begin{align}
%     \max_{w\in S_1 + S_2}\langle z,w \rangle = \max_{w\in S_1}\langle z,w \rangle + \max_{w\in S_2}\langle z,w \rangle.
% \end{align}
% \end{lemma}
% \begin{proof}
% A straightforward calculation gives
% \begin{align}
%     \max_{w\in S_1 + S_2}\langle z,w \rangle
%     &= \max_{w_1\in S_1}\max_{w_2\in S_2} \langle z,w_1+w_2 \rangle \\
%     &= \max_{w_1\in S_1} \langle z,w_1 \rangle + \max_{w_2\in S_2}\langle z,w_2 \rangle.
% \end{align}
% \end{proof}

%\subsection{An Alternative Proof}

In turn, the Cauchy's surface area formula paves the way for two other results of interest. % has the following useful consequences.
\begin{corollary}
\label{cor:perimeter_sum}
For compact convex sets $S_1$ and $S_2$ in $\bbC$,
\begin{align}
    \text{perimeter of $S_1+S_2$} = \text{perimeter of $S_1$} + \text{perimeter of $S_2$}. \nonumber
\end{align}
\end{corollary}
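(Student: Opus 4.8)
The plan is to route everything through the support function and Cauchy's surface area formula (Thm.~\ref{thm:cauchy}). As a preliminary, I would note that the Minkowski sum of two compact convex sets is again compact and convex—it is the image of the compact convex set $S_1\times S_2$ under the continuous linear map $(w_1,w_2)\mapsto w_1+w_2$—so that "perimeter" and "mean width" are meaningful for $S_1+S_2$ and Thm.~\ref{thm:cauchy} is applicable to it.

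The crux is the additivity of the support function under Minkowski sums: for every $z\in\bbC$,
\begin{align}
    \max_{w\in S_1+S_2}\langle z,w\rangle = \max_{w_1\in S_1}\langle z,w_1\rangle + \max_{w_2\in S_2}\langle z,w_2\rangle ,
\end{align}
which follows immediately by writing $w=w_1+w_2$, using the bilinearity of $\langle\cdot,\cdot\rangle$, and observing that the two maximizations are independent. Specializing to $z=e^{j\theta}$ and $z=e^{j(\theta+\pi)}$ and adding, the width of $S_1+S_2$ in the direction $\theta$ equals the sum of the widths of $S_1$ and $S_2$ in that direction; integrating over $\theta\in[0,2\pi]$ shows that the mean width of $S_1+S_2$ is the sum of the mean widths of $S_1$ and $S_2$. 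Applying Cauchy's surface area formula to each of the three sets then converts mean widths into perimeters and yields the claim. (Equivalently, one may skip the mean-width step and combine the additivity of the support function with the perimeter representation $\frac12\int_0^{2\pi}\!\big(\text{width in direction }\theta\big)d\theta$ read off from \eqref{the_constant}; Lemma~\ref{lemma:convex_hull_and_minkowski_sum} is not even needed since $S_1,S_2$ are already convex.)

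I do not anticipate a genuine obstacle here; the argument is essentially a one-line computation with the support function. The only point warranting a word of care is that Cauchy's surface area formula as stated applies to compact convex sets, so the preliminary observation that $S_1+S_2$ inherits compactness and convexity must be invoked before the formula is used on it.
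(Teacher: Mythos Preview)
Your proposal is correct and matches the paper's proof almost verbatim: the paper also invokes the additivity of the support function under Minkowski sums (stated there as a separate lemma with reference to \cite{schneider2013convex}) and then applies Cauchy's surface area formula to convert the integrated support functions into perimeters. Your extra remarks on why $S_1+S_2$ is again compact and convex, and the quick justification of the support-function additivity, are welcome touches that the paper leaves implicit.
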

\begin{proof}
See App. \ref{nadal}.
\end{proof}

\begin{corollary}
\label{cor:perimeter_inclusion}
Given two compact convex sets $S_1$ and $S_2$, if $S_1 \subset S_2 \subset \bbC$, the perimeter of $S_1$ is at most that of $S_2$.
\end{corollary}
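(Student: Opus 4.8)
The plan is to reduce Corollary~\ref{cor:perimeter_inclusion} to a pointwise comparison of support functions and then invoke Cauchy's surface area formula. First I would record the perimeter representation that is implicit in \eqref{the_constant}: combining Thm.~\ref{thm:cauchy} with the periodicity of \eqref{support_function_on_unit_circle}, any compact convex set $S\subset\bbC$ satisfies
\begin{align}
    \text{perimeter of } S = \int_0^{2\pi} \max_{w\in S}\big\langle e^{j\theta}, w\big\rangle\, d\theta ,
\end{align}
with the integral finite since $S$ is compact and hence its support function is bounded.

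Next I would exploit the elementary monotonicity of the support function under set inclusion. Because $S_1\subset S_2$, every point of $S_1$ is admissible in the maximization defining the support function of $S_2$, so
\begin{align}
    \max_{w\in S_1}\big\langle e^{j\theta}, w\big\rangle \;\leq\; \max_{w\in S_2}\big\langle e^{j\theta}, w\big\rangle
\end{align}
for every $\theta\in[0,2\pi]$. Integrating this inequality over $[0,2\pi]$ and applying the displayed perimeter representation to each side immediately yields that the perimeter of $S_1$ is at most that of $S_2$.

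I do not anticipate any genuine obstacle: the argument is two lines once the identity expressing the perimeter as the integral of the support function is in hand, and that identity has already been established (through Thm.~\ref{thm:cauchy}) in the course of deriving \eqref{the_constant}. The only minor points to check are that Thm.~\ref{thm:cauchy} applies to both $S_1$ and $S_2$ --- which it does, as both are assumed compact and convex --- and that the integrands are measurable and finite, which again follows from compactness. A self-contained alternative, such as radially projecting $\partial S_1$ onto $\partial S_2$ and arguing that this projection does not decrease arc length, is available but more cumbersome, so I would favor the support-function route.
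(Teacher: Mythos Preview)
Your proposal is correct and essentially the same as the paper's: the paper observes that the width of $S_1$ in every direction is at most that of $S_2$ and then invokes Cauchy's surface area formula, which is exactly your support-function monotonicity argument rephrased (the width being the sum of two support-function values). If anything, your version is marginally more direct since it bypasses the width language.
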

\begin{proof}
Since, on any direction, the width of $S_1$ is at most that of $S_2$, the result follows from the Cauchy's surface area formula.
\end{proof}

Armed with these corollaries, the main result in \eqref{main_result}--\eqref{best_constant} can be proved in an alternative fashion.
% \begin{theorem}
% \label{thm:another_bound}
% The inequality in \eqref{main_result} holds with 
% \begin{align}
%    \text{constant} = \frac{\text{perimeter of }\conv\,\cW}{2\pi}.
% \end{align}
% \end{theorem}
% \begin{proof}
As per Lemma~\ref{lemma:convex_hull_and_minkowski_sum},
\begin{align}
    &\conv(h_1\cW + \cdots + h_N \cW)   \nonumber \\
    & \qquad = h_1 \conv\, \cW + \cdots + h_N \conv\, \cW \label{minkowski_sum}
\end{align}
and, applying repeatedly Cor. \ref{cor:perimeter_sum}, we have that
\begin{align}
    &\text{perimeter of } h_1 \conv\, \cW + \cdots + h_N \conv\, \cW \\
    &=\text{perimeter of } h_1 \conv\, \cW + \cdots + \text{perimeter of } h_N \conv\, \cW \nonumber
    \\
    &= \sum_n \text{perimeter of } \conv\, \cW \cdot |h_n| 
    \\
    &= \text{perimeter of } \conv\, \cW \cdot \sum_n|h_n| 
    \\
    &=\text{perimeter of }\conv\,\cW \cdot g_{\text{ideal}}. \label{perimeter_of_minkowski_sum}
\end{align}
As the optimal objective of \eqref{reformulation} is $g_{\text{ideal}}$,
\eqref{minkowski_sum} is contained in the disk
$    \{z\in\bbC: |z|\leq g_\cW \}$.
Applying Cor. \ref{cor:perimeter_inclusion}, one obtains
\begin{align}
    2\pi g_\cW
    &= \text{perimeter of } \{z\in\bbC: |z|\leq g_\cW \}\\
    &\geq \text{perimeter of } h_1 \conv\, \cW + \cdots + h_N \conv\, \cW\\
    &=\text{perimeter of }\conv\,\cW \cdot g_{\text{ideal}}.
\end{align}
% \end{proof}

This geometric proof also provides intuition on why Thm.~\ref{thm:optimality_of_constant} must hold. From the reformulation in \eqref{reformulation}, 
the set \eqref{minkowski_sum} has an element whose modulus is $g_{\cW}$. For $ h_n = e^{j\frac{2\pi n}{N}}$, this set,
\begin{align}
      e^{j\frac{2\pi\cdot 1}{N}}  \conv\, \cW + e^{j\frac{2\pi\cdot 2}{N}}  \conv\, \cW+ \cdots + e^{j\frac{2\pi\cdot N}{N}} \conv\, \cW, \nonumber %\label{rotational_invariance}
\end{align}
is invariant under $\frac{2\pi}{N}$-rotation.
From this rotational invariance and the convexity, \eqref{minkowski_sum} includes a regular $N$-gon inscribed in a disk of radius $g_{\cW}$.
By means of Cor.~\ref{cor:perimeter_inclusion}, \eqref{perimeter_of_minkowski_sum} is bounded below by the perimeter of $\conv\,\cW_N \cdot g_{\cW}$,
%\begin{align}
%    \text{perimeter of }\conv\,\cW_N \cdot g_{\text{ideal}},
%\end{align}
meaning that
\begin{align}
    &\text{perimeter of }\conv\,\cW \cdot g_{\text{ideal}}\geq \text{perimeter of }\conv\,\cW_N \cdot g_{\cW}. \nonumber
\end{align}
Letting $N\rightarrow\infty$, the above becomes
\begin{align}
    g_{\cW} \leq \frac{\text{perimeter of }\conv\,\cW}{2\pi} \cdot g_{\text{ideal}}
\end{align}
for $h_n = e^{j\frac{2\pi n}{N}}$, %implying the tightness of the constant in \eqref{perimeter_in_disguise} and
which reaffirms Thm. \ref{thm:optimality_of_constant}.
% as $N$ gets larger the set resembles a disk of radius $g_{\text{ideal}}$.

A special case of the foregoing geometric proof, corresponding to $\cW=\{0,1\}$ as per \eqref{binary_case},
is given in \cite[Thm.~3]{daykin1974sets}, in a derivation attributed to S. Kakutani.

\section{A Minute Tightening}
\label{sec_tightening}

According to Thm. \ref{thm:optimality_of_constant}, the obtained constant is optimal in the sense of the bound holding for any $N$. %is optimal in that the bound applicable to all $N$ ceases to hold for a larger constant.
For a specific $N$, though, there is room for some slight further tightening when $\conv \,\cW$ is a polygon. Indeed, for each $N$ one can determine the constant satisfying \eqref{main_result}
for any $h_1,\ldots, h_N$ and, as $N$ grows, this constant decreases monotonically and converges to \eqref{best_constant}. A bound on this constant for specific $N$
%which is tighter than \eqref{the_constant},
is given next. 

\begin{theorem}
\label{thm:refinement}
Provided that $\conv \,\cW$ is a polygon with $M$ vertices, \eqref{main_result} holds with the constant
\begin{align}
    \frac{\text{perimeter of }\conv\,\cW}{\text{perimeter of }\conv\,\cW_{MN}} = \frac{\text{perimeter of }\conv\,\cW}{2MN \sin \frac{\pi}{MN}}. \label{constant_fixed_N}
\end{align}
%For $\cW = \cW_M$ in particular, this constant cannot be improved upon.
\end{theorem}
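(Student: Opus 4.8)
The plan is to refine the geometric argument of Sec.~\ref{geometric_proof}, replacing the enclosing disk $\{z : |z| \le g_\cW\}$ --- whose perimeter $2\pi g_\cW$ produced the constant \eqref{best_constant} --- by a sharper enclosure that exploits the limited number of edges of the relevant Minkowski sum. Set $P \bydef h_1\conv\,\cW + \cdots + h_N\conv\,\cW$, which by Lemma~\ref{lemma:convex_hull_and_minkowski_sum} equals $\conv(h_1\cW + \cdots + h_N\cW)$. Two facts about $P$ are already available: from the reformulation \eqref{reformulation}, $\max_{w \in P}|w| = g_\cW$, so $P$ lies in the disk of radius $g_\cW$; and, exactly as in the chain ending at \eqref{perimeter_of_minkowski_sum}, repeated application of Cor.~\ref{cor:perimeter_sum} gives $\text{perimeter of }P = \text{perimeter of }\conv\,\cW \cdot g_\text{ideal}$. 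The new observation is that, when $\conv\,\cW$ is a polygon with $M$ vertices, each $h_n\conv\,\cW$ is a (possibly degenerate) polygon with at most $M$ edges, and since the edge normals of a Minkowski sum of convex polygons are drawn from the union of the edge normals of the summands, $P$ is a convex polygon with at most $MN$ edges.

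It then remains to prove the following planar fact: a convex polygon with at most $k$ edges contained in a disk of radius $r$ has perimeter at most $2kr\sin\frac{\pi}{k}$, that is, $r$ times the perimeter of the regular $k$-gon $\conv\,\cW_k$. I would derive this from Cauchy's formula (Thm.~\ref{thm:cauchy}), rewritten via periodicity as $\text{perimeter of }S = \int_0^{2\pi} \max_{w\in S}\langle e^{j\theta}, w\rangle\, d\theta$. Let $V_1,\ldots,V_k$ be the vertices, $\alpha_1,\ldots,\alpha_k$ the exterior angles (so $\sum_i \alpha_i = 2\pi$), and $N_1,\ldots,N_k$ the associated normal cones, which tile $[0,2\pi)$ with $|N_i|=\alpha_i$ and on which $\max_{w\in S}\langle e^{j\theta},w\rangle = \langle e^{j\theta},V_i\rangle = |V_i|\cos(\theta-\arg V_i)$. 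Integrating the cosine over an arc of length $\alpha_i$ gives $\int_{N_i}\langle e^{j\theta},V_i\rangle\,d\theta \le 2|V_i|\sin\frac{\alpha_i}{2} \le 2r\sin\frac{\alpha_i}{2}$, irrespective of where the cone sits relative to $\arg V_i$; summing over $i$ and applying Jensen's inequality to the concave function $\sin$ on $[0,\pi]$ yields $\text{perimeter of }S \le 2r\sum_i\sin\frac{\alpha_i}{2}\le 2kr\sin\frac{\pi}{k}$, and since $k\mapsto k\sin\frac{\pi}{k}$ is increasing the actual edge count may be replaced by its upper bound $MN$.

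Combining the pieces with $r = g_\cW$,
\begin{align}
    \text{perimeter of }\conv\,\cW\cdot g_\text{ideal} &= \text{perimeter of }P \nonumber\\
    &\le 2MN\sin\frac{\pi}{MN}\cdot g_\cW = \text{perimeter of }\conv\,\cW_{MN}\cdot g_\cW, \nonumber
\end{align}
which rearranges into the asserted bound; letting $N\to\infty$ recovers Thm.~\ref{thm:main_result} since $2MN\sin\frac{\pi}{MN}\to 2\pi$, while $2MN\sin\frac{\pi}{MN}<2\pi$ confirms this is a genuine tightening. I expect the main obstacle to be the planar lemma on the maximal perimeter of a $k$-gon contained in a disk: although elementary and essentially classical, it must be stated and proved cleanly, and in particular the per-cone estimate has to hold without assuming the origin lies inside $P$ (it does, because $\cos(s_i+\frac{\alpha_i}{2})\le 1$ for any offset $s_i$ of the cone). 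A lesser point requiring attention is the edge-count bound for Minkowski sums of convex polygons, including the degenerate case $M=2$ in which $\conv\,\cW$ is a segment and $P$ is a zonotope with at most $2N$ edges.
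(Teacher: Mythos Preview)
Your overall strategy matches the paper's exactly: reuse the geometric reformulation of Sec.~\ref{geometric_proof}, observe that the Minkowski sum $P=h_1\conv\,\cW+\cdots+h_N\conv\,\cW$ is a convex polygon with at most $MN$ edges, and invoke a planar isoperimetric lemma bounding the perimeter of a $k$-gon inside a disk of radius $r$ by $2kr\sin(\pi/k)$. The paper states this as Lemma~\ref{lemma:maximum_perimeter_polygon} and cites a computational-geometry text for the edge count; you supply the standard edge-normal argument, which is fine.

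Where you diverge is in the proof of the planar lemma. The paper argues geometrically: pick an interior point, push each vertex radially outward along the ray from that point until it hits the boundary circle, obtain a circumscribed $k$-gon of larger perimeter (via Cor.~\ref{cor:perimeter_inclusion}), and then compute side lengths as $2\sin\phi_m$ with $\sum\phi_m=\pi$ before applying concavity of sine. You instead stay within the support-function framework: decompose Cauchy's integral over the normal fans, bound each piece by $2r\sin(\alpha_i/2)$ via the elementary estimate $\int_{N_i}\cos(\theta-\arg V_i)\,d\theta=2\cos(s_i)\sin(\alpha_i/2)\le 2\sin(\alpha_i/2)$, and then apply Jensen. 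Your route avoids the auxiliary geometric construction and handles the origin-outside-$P$ case automatically (since $|V_i|\cos s_i\le r$ regardless of the sign of $\cos s_i$), at the cost of setting up the normal-cone decomposition; the paper's route is more visual but needs the containment step. Both are correct and land on the same Jensen inequality.
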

\begin{proof}
See App. \ref{mac}
\end{proof}

The maximum discrepancy between \eqref{best_constant} and \eqref{constant_fixed_N}, pertaining to the case $M=N=2$, amounts to $0.9$-dB difference in beamforming gain.

For the special case of a regular $M$-gon, $\cW = \cW_M$, the constant in \eqref{constant_fixed_N} cannot be improved upon and it reduces to
\begin{align}
    \frac{\sin \frac{\pi}{M}}{N \sin \frac{\pi}{M N}} ,
\end{align}
which generalizes the result proved in \cite[Thm. 3.1]{grundbacher2025sharp} for $\cW_2$.
%whose special case for $M=2$
%The special case of Thm. \ref{thm:refinement} in which $\cW = \cW_2$ is proved in \cite[Thm. 3.1]{grundbacher2025sharp}.
As $N$ grows large, the above reverts as anticipated to \eqref{best_constant}, which, for a regular $M$-gon, adopts the form
\begin{align}
    \frac{M}{\pi} \sin \frac{\pi}{M} .
\end{align}
This formula returns shortfalls of $-3.9$ dB, $-0.9$ dB, and $-0.2$ dB, respectively for $1$-bit, $2$-bit, and $3$-bit discrete uniform phase shifters.

Beyond the realm of polygons, the best constant can also be determined for $\cW = \{0,1\}$ in some cases. It is
$\frac{1}{2}$ for $N=2$, and $\frac{1}{3}$ for $N=3$ obtained from a trivial bound
\begin{align}
    \max_{\bw \in \{0,1\}^N} \bigg| \sum_n w_n h_n \bigg| \geq \max_n |h_n| \geq \frac{1}{N} \sum_n |h_n|.
\end{align} % but the best constant for $N>3$ is elusive.

\section{Conclusion}
\label{sec_conclusion}

Given any arbitrary set of feasible phase shifts, \eqref{constant_in_dB}
quantifies the maximum penalty in transmit or receive beamforming gain, or in RIS-assisted beamforming gain, relative to ideality.
% In fading channels, this holds for both the instantaneous and the average gain.
In i.i.d. fading channels, the penalty concentrates to its maximum as the number of antennas grows.
As potential follow-up work, one could entertain the generalization to correlated fading channels,
in particular the assessment of whether \eqref{constant_in_dB} continues to hold.

\section{Acknowledgment}
Heedong Do would like to recognize \cite[Exercise 14.9]{steele2004cauchy}, which brought \cite{bledsoe1970inequality} to his attention.

\appendices

\section{}
\label{app:direct_path}

% This appendix proves that
% \begin{align}
%      \max_{\bw \in \cW_M^N} \bigg| h_0 + \sum_n w_n h_n \bigg| = \max_{(w_0, \bw) \in \cW_M^{N+1}} \bigg| w_0 h_0 + \sum_n w_n h_n \bigg|,
% \end{align}

For any $h_0, \ldots, h_N\in\bbC$,
\begin{align}
    & \!\!\!\!\!\!\!\!\!\!\!\!\!\!\!\! \max_{(w_0, \bw) \in \cW_M^{N+1}} \bigg| w_0 h_0 + \sum_n w_n h_n \bigg| \nonumber \\
    &\qquad\qquad = \!\! \max_{(w_0, \bw) \in \cW_M^{N+1}} \bigg| h_0 + \sum_n \frac{w_n}{w_0} h_n \bigg| \\
    &\qquad\qquad \leq \max_{\bw \in \cW_M^N} \bigg| h_0 + \sum_n w_n h_n \bigg|,
\end{align}
where the inequality holds because $\big\{\frac{w_n}{w_0}: w_0, w_n \in \cW \big\} = \cW$.
The inequality in the reverse direction trivially holds.

% Let us now move onto the converse part. Let us assume
% \begin{align}
%     \max_{w_1\in\cW} \big| h_0 + w_1 h_1 \big| = \!\! \max_{w_0, w_1 \in \cW} \big| w_0 h_0 +  w_1 h_1 \big|
% \end{align}
% for any $h_0, h_1\in\bbC$. \heedong{Pending...}

% \section{}
% \label{app:support_function}
% It suffices to show
% \begin{align}
%     \max_{w\in S}\langle z,w \rangle \geq \max_{w\in \conv\,S}\langle z,w \rangle
% \end{align}
% as the reverse direction is trivial from $S \subset \conv\,S$. Any point $w$ in $\conv \,S$ can be written as a convex combination of the elements of $S$. That is, there exist $\lambda_k\geq 0$ and $z_k\in S$ such that
% \begin{align}
%     w = \sum_k \lambda_k z_k \qquad \sum_k \lambda_k = 1.
% \end{align}
% From the linearity,
% \begin{align}
%     \langle z,w \rangle &= \bigg\langle z, \sum_k \lambda_k z_k \bigg\rangle\\
%     &=  \sum_k \lambda_k \langle z, z_k \rangle \leq \max_{w\in S}\langle z,w \rangle.
% \end{align}

\section{}
\label{Ferretti}
Recalling \eqref{one_dimensional_search},
\begin{align}
    \max_{\bw \in \cW^N} \bigg| \sum_n w_n h_n \bigg| = \max_{\theta\in [0,2\pi]} \sum_n |h_n| \max_{w \in \cW} \big\langle e^{j(\theta-\theta_n)}, w \big\rangle.
\end{align}
As the maximum is at least the average, a lower bound is given by
\begin{align}
    &\frac{1}{2\pi}\int_0^{2\pi} \sum_n |h_n| \max_{w \in \cW} \big\langle e^{j(\theta-\theta_n)}, w \big\rangle d \theta \nonumber \\
    &\qquad\qquad =  \sum_n |h_n| \cdot \frac{1}{2\pi}\int_0^{2\pi}\max_{w \in \cW}\big\langle e^{j(\theta-\theta_n)}, w \big\rangle d \theta. \label{maximum_is_at_least_average}
\end{align}
From the periodicity of \eqref{support_function_on_unit_circle}, it holds that
\begin{align}
    \frac{1}{2\pi}\int_0^{2\pi}\max_{w \in \cW}\big\langle e^{j(\theta-\theta_n)}, w \big\rangle d \theta = \frac{1}{2\pi}\int_0^{2\pi}\max_{w \in \cW}\big\langle e^{j\theta}, w \big\rangle d \theta
\end{align}
for all $n$.
This turns \eqref{maximum_is_at_least_average} into
\begin{align}
     \frac{1}{2\pi}\int_0^{2\pi}\max_{w \in \cW}\big\langle e^{j\theta}, w \big\rangle d \theta \cdot \sum_n |h_n|,
\end{align}
as desired.

\section{}
\label{rain}

% Let us prove the contrapositive of the result, meaning that, if \eqref{main_result} holds for any $N$ and $h_1, \ldots, h_N$, then
% \begin{align}
%     \text{constant} \leq  \frac{1}{2\pi}\int_0^{2\pi} \max\limits_{w \in \cW}\big\langle e^{j\theta}, w \big\rangle  d \theta.
% \end{align}
It suffices to find one concrete such $N$ and $h_1,\ldots,h_N$. For a given $N$, let
\begin{align}
  h_n = e^{j\frac{2\pi n}{N}} . \label{venezuela}
\end{align}
The strategy shall be letting $N\rightarrow \infty$ on both sides of
\begin{align}
    \frac{1}{N}\max_{\bw \in \cW^N} \bigg| \sum_n w_n h_n \bigg| \geq \text{constant} \cdot \frac{1}{N}\sum_n |h_n|. \label{reis}
\end{align}
From \eqref{one_dimensional_search}, the left-hand side equals
\begin{align}
    \max_{\theta\in [0,2\pi]} \frac{1}{N}\sum_n \max_{w\in\cW}\big\langle e^{j(\theta-\frac{2\pi n}{N})}, w \big\rangle. \label{riemann_sum}
\end{align}
As \eqref{support_function_on_unit_circle} is a continuous function over a compact set, it is uniformly continuous \cite[Thm. 4.42]{pugh2015real}.
Therefore, the convergence
\begin{align}
    \frac{1}{N} \sum_n  \max_{w\in\cW}\big\langle e^{j(\theta-\frac{2\pi n}{N})}, w \big\rangle \rightarrow \frac{1}{2\pi}\int_0^{2\pi} \max_{w\in\cW}\big\langle e^{j\theta}, w \big\rangle \, d\theta \nonumber
\end{align}
is uniform \cite[Exercise 4.51]{pugh2015real}. For $N\rightarrow \infty$, the left-hand side of \eqref{reis} becomes
\begin{align}
    &\lim_{N\rightarrow \infty}\max_{\theta\in [0,2\pi]} \frac{1}{N}\sum_n  \max_{w\in\cW}\big\langle e^{j(\theta-\frac{2\pi n}{N})}, w \big\rangle \nonumber \\
    &\qquad\quad= \max_{\theta\in [0,2\pi]} \lim_{N\rightarrow \infty}\frac{1}{N}\sum_n  \max_{w\in\cW}\big\langle e^{j(\theta-\frac{2\pi n}{N})}, w \big\rangle\\
    &\qquad\quad= \max_{\theta\in [0,2\pi]} \frac{1}{2\pi}\int_0^{2\pi}  \max_{w\in\cW} \big\langle e^{j\theta}, w \big\rangle \, d\theta\\
    &\qquad\quad= \frac{1}{2\pi}\int_0^{2\pi}  \max_{w\in\cW} \big\langle e^{j\theta}, w \big\rangle \, d\theta,
\end{align}
where the limit and the maximum can be interchanged thanks to the uniform convergence.
As of the right-hand side of \eqref{reis}, it directly equals the constant because, in light of \eqref{venezuela}, $\frac{1}{N}\sum_n |h_n| = 1$.
Altogether,
\begin{align}
    \frac{1}{2\pi}\int_0^{2\pi} \max\limits_{w \in \cW}\big\langle e^{j\theta}, w \big\rangle  \, d\theta \geq \text{constant}
\end{align}
as desired.

\section{}
\label{app:law_of_large_numbers}
Recalling \eqref{one_dimensional_search}, it suffices to show that
\begin{align}
    \frac{g_\cW}{N}&=\max_{\theta\in [0,2\pi]} \frac{1}{N}\sum_n |h_n| \max_{w \in \cW} \big\langle e^{j(\theta-\theta_n)}, w \big\rangle \\
    & \stackrel{\text{a.s.}}{\to} \bbE\big[|h|\big] \cdot \frac{1}{2\pi}\int_0^{2\pi} \max_{w\in\cW}\big\langle e^{j\theta} , w \big\rangle d\theta . \label{convergence_to_constant}
\end{align}
To that purpose, let us render explicit the sample space $\Omega$, on which the random variables $h_1, h_2, \ldots$ and $\theta_1, \theta_2, \ldots$ live, and its probability function $P$.
%which were under the hood.
The goal here is to find $\Omega'\subset \Omega$ satisfying $P(\Omega')=1$ and
\begin{align}
    &\max_{\theta\in [0,2\pi]} \frac{1}{N}\sum_n |h_n(\omega)| \max_{w \in \cW} \big\langle e^{j(\theta-\theta_n(\omega))}, w \big\rangle \label{uniform_convergence}\\
    &\qquad\qquad\qquad\qquad \rightarrow \bbE\big[|h|\big] \cdot \frac{1}{2\pi}\int_0^{2\pi} \max_{w\in\cW}\big\langle e^{j\theta}, w \big\rangle d\theta \nonumber
\end{align}
for any $\omega\in \Omega'$.

From the strong law of large numbers \cite[Cor. 5.2.1]{chow1997probability}, for every $\theta\in[0,2\pi]$ there exists $\Omega_\theta\subset \Omega$ satisfying $P(\Omega_\theta)=1$ and
\begin{align}
    &\frac{1}{N}\sum_n |h_n(\omega)| \max_{w \in \cW} \big\langle e^{j(\theta-\theta_n(\omega))}, w \big\rangle \label{first_slln}\\
    &\qquad\qquad\qquad \rightarrow \bbE\big[|h|\big] \cdot \frac{1}{2\pi}\int_0^{2\pi} \max_{w\in\cW}\big\langle e^{j\theta}, w \big\rangle d\theta \nonumber
\end{align}
for any $\omega\in \Omega_\theta$. Repeating the argument, we also have $A \subset \Omega$ such that $P(A)=1$ and
\begin{align}
    &\frac{1}{N}\sum_n |h_n(\omega)| \rightarrow \bbE\big[|h|\big]. \label{second_slln}
\end{align}
As a countable intersection of events of probability one is itself of probability one \cite[Exercise 1.5.5]{chow1997probability}, $P(\Omega')=1$ with
\begin{align}
    \Omega' = A \cap \bigcap_{\theta\in[0,2\pi]\cap \bbQ} \Omega_\theta.
\end{align}
% Taking
% \begin{align}
%     S' = \bigcap_{\theta\in[0,2\pi]\cap \bbQ} S_\theta,
% \end{align}
% we can see that
% \begin{align}
%     P(S')&= P\Bigg(\bigcap_{\theta\in[0,2\pi]\cap \bbQ} S_\theta\Bigg)\\
%     &= 1-P\Bigg(\bigcup_{\theta\in[0,2\pi]\cap \bbQ} S_\theta^c\Bigg)\\
%     &\geq 1-\sum_{\theta\in[0,2\pi]\cap \bbQ} P(S_\theta^c) =1,
% \end{align}
% which follows from the countability of $[0,2\pi]\cap \bbQ$. 
For $\omega\in \Omega'$ fixed, one can think of the left-hand side of \eqref{first_slln},
\begin{align}
    \frac{1}{N}\sum_n |h_n(\omega)| \max_{w \in \cW} \big\langle e^{j(\theta-\theta_n(\omega))}, w \big\rangle, \label{sequence_of_functions}
\end{align}
as a sequence of functions of $\theta$. Two properties of \eqref{sequence_of_functions} come in handy:
\begin{enumerate}
    \item By construction, for $\theta\in[0,2\pi]\cap \bbQ$, \eqref{sequence_of_functions} converges pointwisely to the nonrandom limit,
    \begin{align}
        \bbE\big[|h|\big] \cdot \frac{1}{2\pi}\int_0^{2\pi} \max_{w\in\cW}\big\langle e^{j\theta}, w \big\rangle d\theta.
    \end{align}
    \item Each summand in \eqref{sequence_of_functions} is Lipschitz continuous as 
    \begin{align}
    &\Big|\max_{w\in\cW}\big\langle e^{j\theta}, w \big\rangle - \max_{w\in\cW}\big\langle e^{j\theta'}, w \big\rangle \Big|\\
    &\qquad\qquad \leq \Big|\max_{w\in\cW}\Big(\big\langle e^{j\theta}, w \big\rangle - \big\langle e^{j\theta}, w \big\rangle\Big) \Big|\\
    &\qquad\qquad = \Big|\max_{w\in\cW}\big\langle e^{j\theta}-e^{j\theta'}, w \big\rangle\Big|\\
    &\qquad\qquad \leq \big|e^{j\theta}-e^{j\theta'}\big| \max_{w\in\cW}|w|\\
    &\qquad\qquad \leq |\theta - \theta'| \max_{w\in\cW}|w|.
    \end{align}
    Therefore, \eqref{sequence_of_functions} is a function with Lipschitz constant
    \begin{align}
        \max_{w\in\cW}|w| \cdot \frac{1}{N}\sum_n |h_n(\omega)|. \label{lipschitz_constant}
    \end{align}
    From \eqref{second_slln} and the fact that a convergent sequence is bounded, \eqref{lipschitz_constant} is bounded above. This means the sequence of functions in \eqref{sequence_of_functions} is equicontinuous \cite[Exercise 4.15]{pugh2015real}.
\end{enumerate}
These two observations allow for the application of the Arzel\`a-Ascoli propagation theorem \cite[Thm. 4.16]{pugh2015real}, extending the pointwise convergence over $[0,2\pi]\cap \bbQ$ to the uniform convergence over $[0,2\pi]$. This proves \eqref{convergence_to_constant} and, with that, the convergence of $\frac{g_\cW}{N}$ to \eqref{constant}.

\section{}
\label{app:uniform_integrability}

% The convergence a.s. of a sequence of random variables $X_n$ to a nonrandom limit $x$ does not necessarily imply that $\bbE[X_n] \rightarrow x$. 
% To guarantee this, uniform integrability is further required \cite[Cor. 4.2.4]{chow1997probability}.

From \cite[Example 4.3.1]{chow1997probability}, given that $\big\{|h_n|\big\}$ are i.i.d. random variables with $\bbE\big[|h|^p\big]<\infty$,
\begin{align}
    \bigg(\frac{g_{\text{ideal}}}{N}\bigg)^{\!p} = \bigg(\frac{1}{N}\sum_n |h_n|\bigg)^{\!p}
\end{align}
is uniformly integrable. As
\begin{align}
     0 \leq \bigg(\frac{g_\cW}{N}\bigg)^{\!p} \leq \bigg(\frac{g_{\text{ideal}}}{N}\bigg)^{\!p},   
\end{align}
$\big(\frac{g_\cW}{N}\big)^p$
is uniformly integrable as well \cite[pp. 94]{chow1997probability}. In conjunction with the convergence in probability, implied by the a.s. convergence \cite[Cor. 3.3.1]{chow1997probability}, $\big(\frac{g_\cW}{N}\big)^p$ converges in mean of order $p$ to the same limit \cite[Thm. 4.2.3]{chow1997probability}. %The second part immediately follows from \cite[Cor. 4.2.5]{chow1997probability}.

\section{}
\label{loto}
Although the proof can be found in \cite[Lemma 1]{do2026optimum}, it is derived here concisely as a direct consequence of 
Lemma~\ref{lemma:convex_hull_and_support_function}:
\begin{align}
    \max_{w\in S}|w| &= \max_{w\in S}\max_{\theta\in [0,2\pi]}  \langle e^{j\theta}, w \rangle\\
    &= \max_{\theta\in [0,2\pi]} \max_{w\in S}  \langle e^{j\theta}, w \rangle\\
    &= \max_{\theta\in [0,2\pi]} \max_{w\in \conv\,S} \langle e^{j\theta}, w \rangle\\
    &= \max_{w\in \conv\,S} \max_{\theta\in [0,2\pi]} \langle e^{j\theta}, w \rangle\\
    &= \max_{w\in \conv S}|w|.
\end{align}
%The proof is also provided in \cite[Lemma 1]{do2026optimum}.

\section{}
\label{nadal}

Applying the Cauchy's surface area formula and Lemma~\ref{lemma:support_function_and_minkowski_sum} (presented below), we have that
\begin{align}
    &\text{perimeter of $S_1+S_2$} \nonumber \\
    &\qquad= \int_0^{2\pi} \max_{w\in S_1+S_2}\langle e^{j\theta}, w \rangle  \, d\theta \\
    &\qquad= \int_0^{2\pi} \max_{w\in S_1}\langle e^{j\theta}, w \rangle \, d\theta + \int_0^{2\pi} \max_{w\in S_2}\langle e^{j\theta}, w \rangle  \, d\theta \\
    &\qquad= \text{perimeter of $S_1$} + \text{perimeter of $S_2$}.
\end{align}

\begin{lemma}
\label{lemma:support_function_and_minkowski_sum}
For compact sets $S_1$ and $S_2$ in $\bbC$,
\begin{align}
    \max_{w\in S_1 + S_2}\langle z,w \rangle = \max_{w\in S_1}\langle z,w \rangle + \max_{w\in S_2}\langle z,w \rangle.
\end{align}
\end{lemma}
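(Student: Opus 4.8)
\section*{Proof proposal for Lemma~\ref{lemma:support_function_and_minkowski_sum}}

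The plan is to exploit the bilinearity of the inner product in \eqref{inner_product} together with the fact that maximizing a \emph{separable} objective over a product domain decouples into two independent maximizations. Concretely, every element of the Minkowski sum $S_1 + S_2$ is of the form $w = w_1 + w_2$ with $w_1 \in S_1$ and $w_2 \in S_2$, and since $\langle z, \cdot \rangle$ is linear (it is $\Re\{\overline{z}\,\cdot\}$), one has $\langle z, w_1 + w_2 \rangle = \langle z, w_1 \rangle + \langle z, w_2 \rangle$. Hence
\begin{align}
    \max_{w\in S_1 + S_2}\langle z,w \rangle
    = \max_{\substack{w_1 \in S_1 \\ w_2 \in S_2}} \big( \langle z, w_1 \rangle + \langle z, w_2 \rangle \big)
    = \max_{w_1 \in S_1} \langle z, w_1 \rangle + \max_{w_2 \in S_2} \langle z, w_2 \rangle,
\end{align}
where the last equality is the standard decoupling of a supremum of a sum of terms depending on disjoint variables.

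First I would note that all the maxima appearing above are genuinely attained rather than merely suprema: $S_1$ and $S_2$ are compact and $w \mapsto \langle z, w \rangle$ is continuous, and $S_1 + S_2$ is compact as the image of the compact set $S_1 \times S_2$ under the continuous addition map, so each maximum exists and is finite. Second I would spell out the decoupling step carefully in both directions: ``$\leq$'' follows because for any $w_1 + w_2 \in S_1+S_2$ we have $\langle z, w_1\rangle + \langle z, w_2 \rangle \le \max_{S_1}\langle z,\cdot\rangle + \max_{S_2}\langle z,\cdot\rangle$; ``$\geq$'' follows by picking maximizers $w_1^\star \in S_1$, $w_2^\star \in S_2$ and observing $w_1^\star + w_2^\star \in S_1 + S_2$ with $\langle z, w_1^\star + w_2^\star\rangle$ equal to the right-hand side.

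I do not anticipate any real obstacle here; the only thing to be careful about is the attainment of the maxima (which is why compactness of $S_1, S_2$ is hypothesized), and the elementary observation that Minkowski addition interacts with a linear functional exactly by addition. This is precisely the classical statement that the support function of a Minkowski sum is the sum of support functions \cite[Thm.~1.7.5]{schneider2013convex}, so one could alternatively simply cite that reference; I would include the two-line self-contained argument above for completeness.
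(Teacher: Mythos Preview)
Your proposal is correct and matches the paper's approach: the paper simply cites \cite[Thm.~1.7.5]{schneider2013convex}, which is exactly the reference you invoke, and your self-contained two-line argument via linearity of $\langle z,\cdot\rangle$ and decoupling of the maximum over $S_1\times S_2$ is the standard proof of that result.
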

\begin{proof}
See \cite[Thm. 1.7.5]{schneider2013convex}.
\end{proof}

\section{}
\label{mac}
The proof is in essence identical to the geometric proof in Sec. \ref{geometric_proof}. An additional observation is that \eqref{minkowski_sum} has at most $N|\cW|$ vertices \cite[Ch. 13.3]{de2008computational}. Application of Lemma~\ref{lemma:maximum_perimeter_polygon} (presented below) gives
\begin{align}
    &2MN \sin \frac{\pi}{MN} \cdot g_{\cW} \nonumber \\
    &\qquad\geq \text{perimeter of } h_1 \conv\, \cW + \cdots + h_N \conv\, \cW\\
    &\qquad= \text{perimeter of }\conv\,\cW \cdot g_{\text{ideal}}.
\end{align}
Equality holds if and only if $h_1 \conv\, \cW + \cdots + h_N \conv\, \cW$ forms the regular $MN$-gon. For $\cW = \cW_M$, $h_n = e^{j2\pi\frac{n}{MN}}$ for $n=1,\ldots,N$ satisfies this condition.

\begin{lemma}
\label{lemma:maximum_perimeter_polygon}
Consider a convex polygon $S$ with $M$ vertices lying inside the unit circle. The perimeter of $S$ is at most
\begin{align}
    \text{perimeter of }\cW_M = 2M\sin\frac{\pi}{M}.
\end{align}
\end{lemma}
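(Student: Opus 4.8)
The plan is to combine Cauchy's surface area formula (Thm.~\ref{thm:cauchy}) with a convexity observation that pushes the extremal vertices onto the unit circle, after which the problem collapses to a one-line application of Jensen's inequality for the concave sine function.

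Write $S=\conv\{v_1,\dots,v_M\}$ with $|v_i|\le 1$ for every $i$. By Cauchy's surface area formula --- in the form used in \eqref{the_constant}, which identifies the perimeter of a compact convex set with the integral of its support function over the unit circle --- together with Lemma~\ref{lemma:convex_hull_and_support_function},
\[
\text{perimeter of }S=\int_0^{2\pi}\max_{1\le i\le M}\langle e^{j\theta},v_i\rangle\,d\theta .
\]
Fix an index $k$ and hold $v_i$ fixed for all $i\ne k$. For each $\theta$ the integrand is $\max\!\big(\langle e^{j\theta},v_k\rangle,\,c_\theta\big)$ with $c_\theta=\max_{i\ne k}\langle e^{j\theta},v_i\rangle$ constant, hence a convex function of $v_k$; integrating over $\theta$ preserves convexity, so the perimeter of $S$ is convex in $v_k$ on the disk $\{w\in\bbC:|w|\le 1\}$. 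A convex function on a compact convex set attains its maximum at an extreme point, and the extreme points of the disk are the unit circle; thus $v_k$ may be replaced by a point of the unit circle without decreasing the perimeter. Carrying this out for $k=1,\dots,M$ in turn, it suffices to prove the bound when every $v_i$ lies on the unit circle.

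So suppose $|v_i|=1$ for all $i$, and let $u_1,\dots,u_m$ (with $m\le M$) be the vertices of $\conv\{v_1,\dots,v_M\}$ in cyclic order; if $m\le 1$ the perimeter is $0$ and there is nothing to prove, so assume $m\ge 2$. Let $\alpha_j>0$ be the angular measure of the arc between $u_j$ and $u_{j+1}$ (indices mod $m$), so $\sum_{j=1}^m\alpha_j=2\pi$. Since the chord joining two unit-circle points at angular separation $\alpha_j$ has length $2\sin(\alpha_j/2)$,
\[
\text{perimeter of }S=\sum_{j=1}^m 2\sin\frac{\alpha_j}{2}.
\]
Each $\alpha_j/2$ lies in $(0,\pi)$, a range on which $\sin$ is concave, so Jensen's inequality gives $\frac1m\sum_j\sin(\alpha_j/2)\le\sin\!\big(\frac1m\sum_j\frac{\alpha_j}{2}\big)=\sin\frac{\pi}{m}$, i.e.\ $\text{perimeter of }S\le 2m\sin(\pi/m)$. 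Finally $t\mapsto 2t\sin(\pi/t)$ is increasing on $[2,\infty)$ --- writing $u=\pi/t\in(0,\pi/2]$, its derivative equals $2(\sin u-u\cos u)>0$ --- and tends to $2\pi$, so $2m\sin(\pi/m)\le 2M\sin(\pi/M)=\text{perimeter of }\cW_M$, which is the claim; equality forces $m=M$ and all $\alpha_j$ equal, i.e.\ $S$ a rotated copy of $\cW_M$.

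The one delicate point is the reduction in the second paragraph: pushing interior vertices to the circle by a direct (say radial) displacement would require bookkeeping about preserving convexity of the polygon and about vertices becoming collinear or non-extreme, whereas expressing the perimeter through the support function makes it manifestly convex in each vertex separately and disposes of all of this at once. What remains --- the chord-length formula, concavity of $\sin$ on $[0,\pi]$, and monotonicity of $t\sin(\pi/t)$ --- is entirely elementary.
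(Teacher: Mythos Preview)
Your proof is correct. The endgame is identical to the paper's: once all vertices lie on the unit circle, write the perimeter as $2\sum_j\sin(\alpha_j/2)$ with $\sum_j\alpha_j=2\pi$ and apply concavity of sine. The reduction to the unit circle, however, is done differently. The paper picks an interior point of $S$, shoots rays through the $M$ vertices to the unit circle, and observes that the convex hull of the $M$ hit points contains $S$, so by Cor.~\ref{cor:perimeter_inclusion} its perimeter is at least that of $S$. Your reduction instead expresses the perimeter via the support function, notes it is convex in each $v_k$ separately, and invokes Bauer's maximum principle to push each vertex to the circle. The paper's ray construction has the small advantage that it preserves the vertex count exactly (distinct vertices give distinct rays, hence $M$ distinct circle points in convex position), so the extra monotonicity step $2m\sin(\pi/m)\le 2M\sin(\pi/M)$ is not needed there; your route has the advantage of reusing the Cauchy/support-function machinery already developed in the paper and avoiding any ad hoc geometric construction.
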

\begin{proof}
Although the proof is provided in \cite[Prop. 2.1]{grundbacher2025sharp}, it is repeated here for the sake of completeness.
As the statement trivially holds for $M=1$, it is assumed that $M \geq 2$. A point can then be found in $S$ that is not an extreme point. Let us think of a ray emanating from that point to every vertex. A convex polygon can be constructed by taking the convex hull of the intersection of the rays and the unit circle (see Fig. \ref{fig:better_polygon}). The perimeter of this polygon is at least that of the original polygon, from Cor. \ref{cor:perimeter_inclusion}. With this observation, it suffices to consider the polygons whose vertices are on the unit circle.
The perimeter can then be expressed as
\begin{align}
    2\sum_{m=1}^M \sin \phi_m
\end{align}
with $\sum_{m=1}^M \phi_m = \pi$ and $\phi_m \geq 0$ for all $m$. Concavity of the sine function over $[0,\pi]$ concludes the proof.
% one may obtain the result using Lagrange multipliers instead \cite[Sec. VI]{kutay2024received}.
\end{proof}

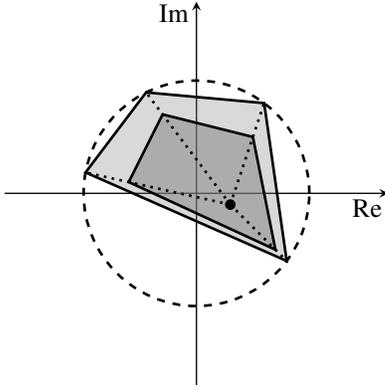
\begin{figure}
    \centering
    \begin{tikzpicture}[>=stealth]
    \begin{scope}[scale = 1.5]
        \clip (-1.7,-1.7) rectangle (1.7,1.7);
        % xy-plane
        \draw[line width=0.5pt, ->] (-1.7,0) -- (1.7,0) node (X) [right, xshift = -0.6 cm, yshift = -0.2 cm]{Re};
        \draw[line width=0.5pt, ->] (0,-1.7) -- (0,1.7) node[above,xshift = -0.3 cm, yshift = -0.4 cm]{Im};
        % set
        \coordinate (P0) at (0.3,-0.1);
        \coordinate (P1) at (0.5,0.5);
        \coordinate (P2) at (-0.3,0.7);
        \coordinate (P3) at (-0.6,0.1);
        \coordinate (P4) at (0.7,-0.5);
        \draw[dashed, line width=1pt, color=black] (0,0) circle (1cm);

        % intersections with the unit circle
        \coordinate (I1) at (0.600000,0.800000);
        \coordinate (I2) at (-0.446200,0.894933);
        \coordinate (I3) at (-0.982729,0.185051);
        \coordinate (I4) at (0.800000,-0.600000);
        
        % draw the new polygon on the circle
        \filldraw[draw=black, fill=gray, line width=1pt, fill opacity=0.3]
        (I1) -- (I2) -- (I3) -- (I4) -- cycle;
        \filldraw[draw=black, fill=gray, line width=1pt, fill opacity=0.5] 
        (P1) -- (P2) -- (P3) -- (P4) -- cycle;
        \filldraw[line width=1pt] (P0) circle (1pt) node{};

        \foreach \i in {1,2,3,4}{
            \draw[dotted, line width=1pt, color=black] (P0) -- (I\i);
        }
        
    \end{scope}
    \end{tikzpicture}
    \caption{Construction of a convex polygon whose perimeter is at least that of the original convex polygon.}
    \label{fig:better_polygon}
\end{figure}

\bibliographystyle{IEEEtran}
\bibliography{ref}

\end{document}